\def\naturals{\mathbb{N}}
\def\upto{{\raise 1pt \hbox{$\scriptstyle \,\nearrow\,$}}}
\def\downto{{\raise 1pt \hbox{$\scriptstyle \,\searrow\,$}}}
\def\B{{\cal B}}
\def\M{{\cal M}}
\def\A{{\cal A}}
\def\N{{\cal N}}
\def\F{{\cal F}}
\def\reals{\mathbb{R}}
\def\ereals{\overline{\mathbb{R}}}
\def\uball{\mathbb{B}}
\def\comp{\raise 1pt \hbox{$\scriptstyle\circ$}}
\def\dom{\mathop{\rm dom}}
\def\ri{\mathop{\rm ri}}
\def\aff{\mathop{\rm aff}}
\def\inte{\mathop{\rm int}\nolimits}
\def\tos{\rightrightarrows}
\def\FF{(\F_t)_{t=0}^T}
\newcommand{\pk}{,\ldots ,}
\newcommand{\pke}{,\ldots}
\newcommand{\lk}{\left\{\,}
\newcommand{\rk}{\right\}}
\newcommand{\mk}{\;\big|\;}
\newtheorem{theorem}{Theorem}
\newtheorem{lemma}[theorem]{Lemma}
\newtheorem{corollary}[theorem]{Corollary}
\newtheorem{definition}[theorem]{Definition}
\newtheorem{example}[theorem]{Example}
\newtheorem{remark}[theorem]{Remark}
\newenvironment{proof}
{\begin{trivlist}\item[\, 
{\bf Proof.}]}{{\hfill $\square$}\end{trivlist}}
\title{Hedging of claims with physical delivery under convex transaction costs}
\author{Teemu Pennanen \and Irina Penner}
\begin{document}
\maketitle

\begin{abstract}
We study superhedging of contingent claims with physical delivery in a discrete-time market model with convex transaction costs. Our model extends Kabanov's currency market model by allowing for nonlinear illiquidity effects. We show that an appropriate generalization of Schachermayer's robust no arbitrage condition implies that the set of claims hedgeable with zero cost is closed in probability. Combined with classical techniques of convex analysis, the closedness yields a dual characterization of premium processes that are sufficient to superhedge a given claim process. We also extend the fundamental theorem of asset pricing for general conical models.
\end{abstract}

\section{Introduction}

This paper studies superhedging of contingent claims with physical delivery in markets with temporary illiquidity effects. Our market model is a generalization of the currency market model of Kabanov~\cite{kab99}. In Kabanov's model price dynamics and transaction costs are modeled implicitly by {\em solvency cones}, i.e.\ sets of portfolios which can be transformed into the zero portfolio by self-financing transactions at a given time and state. An essential difference between Kabanov's model and more traditional models of mathematical finance (including e.g.\ the transaction cost models of Jouini and Kallal~\cite{jk95a}, Cvitani{\'c} and Karatzas~\cite{ck96} and Kaval and Molchanov~\cite{km6}) is that Kabanov's model focuses on contingent claims with {\em physical delivery}, i.e.\ claims whose payouts are given in terms of portfolios of assets instead of a single reference asset like cash. Accordingly, notions of arbitrage as well as the corresponding dual variables are defined in terms of vector-valued processes; see e.g.\ \cite{krs3}, \cite{sch4}.

Astic and Touzi~\cite{at7} extended Kabanov's model by allowing general convex solvency regions in the case of finite probability spaces. Nonconical solvency regions allow for the modeling of temporary illiquidity effects where marginal trading costs may depend on the magnitude of a trade as e.g.\ in {\c{C}}etin, Jarrow and Protter~\cite{cjp4}, {\c{C}}etin and Rogers~\cite{cr7}, {\c{C}}etin, Soner and Touzi~\cite{cst7}, Rogers and Singh~\cite{rs7}, {\c{C}}etin and Rogers~\cite{cr7} or Pennanen~\cite{pen6}. These models cover nonlinear illiquidity effects but they assume that agents have no market power in the sense that their trades do not affect the costs of subsequent trades; see \cite{cr7} for further motivation of this assumption. Temporal illiquidity effects act essentially as nonlinear transaction costs. Moreover, most modern stock exchanges are organized so that the costs are convex with respect to transacted amounts; see \cite{pen6}.

This paper studies general convex solvency regions in general probability spaces in finite discrete time. Our main result gives a sufficient condition for the closedness of the set of contingent claims with physical delivery that can be hedged with zero investment. Classical separation arguments then yield dual characterizations of superhedging conditions much as in Pennanen~\cite{pen8b} in the case of claims with cash delivery. In the conical case, our sufficient condition coincides with the {\em robust no arbitrage condition} and the dual variables become {\em consistent price systems} in the sense of Schachermayer~\cite{sch4}. We also give a version of the ``fundamental theorem of asset pricing'' for general conical models. Even in the conical case our results improve on the existing ones since we do not assume polyhedrality of the solvency cones. 

The rest of this paper is organized as follows. Section~\ref{sec:mm} describes the market model and Section~\ref{sec:mr} gives the main result. Section~\ref{sec:sh} combines the main result with some classical techniques of convex analysis to derive dual characterizations of superhedging conditions. Section~\ref{sec:ftap} generalizes the fundamental theorem of asset pricing to the general conical case. The proof of the main result is contained in Section~\ref{sec:proof}.

\section{The market model}\label{sec:mm}

We consider a financial market in which $d$ securities can be traded over finite discrete time $t=0,\ldots,T$. The information evolves according to a filtration $\FF$ on a probability space $(\Omega,\F,P)$. 

For each $t$ and $\omega$ we denote by $C_t(\omega)\subset\reals^d$ the set of portfolios that are freely available in the market. We assume that for each $t$ the set-valued mapping $C_t:\Omega\tos\reals^d$ is $\F_t$-measurable in the sense that 
\[
C_t^{-1}(U):=\{\omega\in\Omega\,|\,C_t(\omega)\cap U\ne\emptyset\}\in\F_t
\]
for every open set $U\subset\reals^d$. 

\begin{definition}\label{def:mm}
A \emph{market model} is an $\FF$-adapted sequence $C=(C_t)_{t=0}^T$ of closed-valued mappings $C_t:\Omega\tos\reals^d$ with $\reals^d_-\subset C_t(\omega)$ for every $t$ and $\omega$. A market model $C$ is {\em convex, conical, polyhedral, \ldots} if $C_t(\omega)$ has the corresponding property for every $t$ and $\omega$.
\end{definition}

Traditionally, portfolios in financial market models have been defined in terms of a reference asset such as cash or some other num\'{e}raire; see Example~\ref{ex:ccp} below. This is natural when studying financial contracts with cash payments only. Treating all assets symmetrically as in Definition~\ref{def:mm} was initiated by Kabanov~\cite{kab99}.

%\begin{example}[Frictionless markets]
%If $s=(s_t)_{t=0}^T$ is an adapted price process with values in $\reals^d$, then
%\[
%C_t(\omega) = \{x\in\reals^d\,|\, s_t(\omega)\cdot x \le 0\}
%\]
%defines a conical market model. In this case the sets $C_t$ describe all portfolios in physical units that can be purchased at nonpositive costs. 
%%Our notion of a self-financing portfolio process is slightly different form the usual one in the frictionless case, since we allow for throwing away of money when rebalancing the portfolio. However, our results coincide in the frictionless case with  ``classical'' versions of the fundamental theorem of asset pricing and hedging theorem; see \cite{ipen1}.
%\end{example}

%\begin{example}[Bid-ask spreads]
%Jouini and Kallal~\cite{jk95a} studied a model where the total cost of buying a portfolio $x\in\reals^d$ at time $t$ in state $\omega$ is given by 
%\[
%S_t(x,\omega)=\sum_{j\in J} S_t^j(x^j,\omega),
%\]
%where
%\[
%S_t^j(x^j,\omega)=
%\begin{cases}
%\overline s_t^j(\omega)x^j & \text{if $x^j\ge 0$},\\
%\underline s_t^j(\omega)x^j & \text{if $x^j\le 0$}
%\end{cases}
%\]
%for some adapted $\reals^J$-valued price processes $\overline s_t$ and $\underline s_t$ with $\underline s_t\le\overline s_t$. This corresponds to a market with transaction costs or bid-ask spreads, where unlimited amounts of all assets can be bought or sold for prices $\underline s_t$ and $\overline s_t$, respectively. The set 
%\[
%C_t(\omega)=\{x\in\reals^d\,|\,S_t(x,\omega)\le 0\}
%\]
%then defines a conical market model.
%\end{example}

\begin{example}[Currency markets with proportional transaction costs]\label{ex:kab}
If $(s_t)_{t=0}^T$ is an adapted price process with values in $\reals^d_+$ and $(\Lambda_t)_{t=0}^T$ an adapted $\reals_+^{d\times d}$-valued process of transaction cost coefficients, the {\em solvency regions} (in physical units) were defined in Kabanov~\cite{kab99} as
\[\hat{K}_t:=\{ x\in\reals^d\,|\, \exists a\in\reals_+^{d\times d}:\ s_t^ix^i+\sum_{j=1}^d(a^{ji}-(1+\lambda^{ij}_t)a^{ij})\ge0,\,1\le i\le d\}.\]
A portfolio $x$ belongs to the solvency region $\hat{K}_t$ iff, after some possible transfers $(a^{ji})_{1\le i,j\le d}$, it has only nonnegative components. Thus the solvency region describes the set of all portfolios with ``positive'' values. 

One can also define solvency regions directly in terms of bid-ask spreads as in Schachermayer~\cite{sch4}. If $(\Pi_t)_{t=0}^T$ is an adapted sequence of {\em bid-ask matrices}, then 
\[
\hat{K}_t = \{x\in\reals^d\,|\, \exists a\in\reals^{d\times d}_+:\ x^i + \sum_{j=1}^d (a^{ji}-\pi_t^{ij}a^{ij})\ge0,\,1\le i\le d\}.
\]
For each $\omega$ and $t$ the set $\hat K_t(\omega)$ is a polyhedral cone and
\[
C_t(\omega): = -\hat K_t(\omega)
\]
defines a conical market model in the sense of Definition~\ref{def:mm}.
%\[
%C_t^* = \{v\in\reals^d\,|\, v^j\le\pi^{i,j}v^i\quad \forall i,j\in J\}
%\]
\end{example}

\begin{example}[Illiquid markets with cash]\label{ex:ccp}
A \emph{convex cost process} is a sequence $S=(S_t)_{t=0}^T$ of extended real-valued functions on $\reals^d\times\Omega$ such that  for all $t$ the function $S_t$ is $\B(\reals^d)\otimes\F_t$-measurable and for each $\omega$ the function $S_t(\cdot,\omega)$ is lower semicontinuous, convex and vanishes at $0$. The quantity $S_t(x,\omega)$ denotes the cost (in cash) of buying a portfolio $x$ at time $t$ and scenario $\omega$; see \cite{pen6,pen8}. If $S$ is a convex cost process, then
\[
C_t(\omega) = \{x\in\reals^d\,|\, S_t(x,\omega)\le 0\},\quad t=0\pk T
\]
defines a convex market model. 

Models with convex cost processes include, in particular, classical frictionless markets (where $S_t(x,\omega)=s_t(\omega)\cdot x$ for an adapted $\reals^d$-valued price process $s$) as well as models with bid-ask spreads or proportional transaction costs as e.g.\ in Jouini and Kallal~\cite{jk95a}. Convex cost processes also allow for modeling of illiquidity effects as e.g.\ in {\c{C}}etin and Rogers~\cite{cr7} where $d=2$ and $S_t((y,x),\omega)=y+s_t(\omega)\varphi(x)$ for a strictly positive adapted process $(s_t)_{t=0}^T$ and an increasing convex function $\varphi:~\reals\rightarrow(-\infty,\infty]$. 

A convex cost process can be identified with the {\em liquidation function} $P_t$ in Astic and Touzi \cite{at7} through $S(x, \omega)=-P_t(-x,\omega)$. Convex cost processes are also related to the \emph{supply curve} introduced in {\c{C}}etin et al.~\cite{cjp4}. A supply curve $s_t(x,\omega)$ gives a price per unit when buying $x$ units of the risky asset so that the total cost is $S_t(x,\omega)=s_t(x,\omega)x$. Instead of convexity of $S$, \cite{cjp4} assumed that the supply curve is smooth in $x$; see also Example 2.2 in \cite{at7}. For more examples of convex cost processes and their properties we refer to \cite{pen8,pen8b}.
\end{example}

\begin{example}[Currency markets with illiquidity costs]
In order to model nonproportional illiquidity effects in a currency market model as in Example~\ref{ex:kab} one can replace a bid-ask matrix $(\Pi_t)_{t=0}^T$ by a matrix of convex cost processes $S^{ij}=(S^{ij}_t)_{t=0}^T$ $(1\le i,j\le d)$ on $\reals_+$. Here $S^{ij}(x,\omega)$ denotes the number of units of asset $i$ for which one can buy $x$ units of asset $j$. In a market with proportional transaction costs we simply have $S^{ij}(x,\omega)=\pi^{ij}(\omega)x$. If $(S^{ij}_t)_{t=0}^T$, $i,j=1,\ldots,d$ are convex cost processes on $\reals_+$ in the sense of Example~\ref{ex:ccp}, then
\[
C_t(\omega) = \{x\in\reals^d\,|\, \exists a\in\reals^{d\times d}_+:\ x^i \le \sum_{j=1}^d(a^{ji}-S_t^{ij}(a^{ij},\omega)),\,1\le i\le d\}, 
\]
for $t=0\pk T$ defines a convex market model.
\end{example}

\section{The main result}\label{sec:mr}

%Several no-arbitrage criteria have been introduced in a market model with proportional transaction costs as in Example \ref{ex:kab}; see, e.g.\ Kabanov and Stricker~\cite{ks1b}, Kabanov, R{\'a}sonyi and Stricker~\cite{krs2},~\cite{krs3} and Schachermayer~\cite{sch4}. We recall some of these notions here and extend them to a convex market model. 

An $\reals^d$-valued adapted process $x=(x_t)_{t=0}^T$ is a {\em self-financing portfolio process} in a market model $C=(C_t)_{t=0}^T$ if
\[
\Delta x_t:=x_t-x_{t-1}\in C_t\quad P\text{-a.s.}
\]
for every $t=0,\ldots,T$, i.e.\ the increments $\Delta x_t$ are freely available in the market. Here and in what follows, we always define $x_{-1}=0$. 

We say that a market model $C$ has the {\em no-arbitrage} property if
\begin{equation}\label{na}
A_T(C)\cap L^0_+=\{0\},
\end{equation}
where $A_T(C)$ denotes the convex set in $L^0$ formed by final values $x_T$ of all self-financing portfolio processes $x=(x_t)_{t=0}^T$. Since $x_T=\sum_{t=0}^T(x_t-x_{t-1})$, we have the expression
\[
A_T(C)=L^0(C_0,\F_0)+\ldots+L^0(C_T,\F_T),
\]
where $L^0(C_t,\F_t)$ denotes the set of all $\F_t$-measurable selectors of $C_t$, i.e.\ the set of all $\F_t$-measurable random vectors $x$ such that $x\in C_t$ almost surely. Condition \eqref{na} was introduced in Kabanov and Stricker~\cite{ks1b} under the name ``weak no-arbitrage property'' in a formally different way. But it is equivalent to \eqref{na} if $C_t$ contains $\reals_-^d$ for all $t$ as noted in Lemma 3.5. of Kabanov~\cite{kab1}.

In classical market models the no-arbitrage condition implies the closedness of the set of contingent claims with cash-delivery that can be superhedged at zero cost, a result which is of vital importance in deriving dual characterizations of superhedging and absence of arbitrage. However, as shown in Schachermayer~\cite{sch4}, in a market with proportional transaction costs as in Example~\ref{ex:kab}, the no-arbitrage property \eqref{na} does not, in general, imply the closedness of the set $A_T(C)$. Schachermayer~\cite{sch4} also showed, in the case of the conical model of Example~\ref{ex:kab}, that $A_T(C)$ is closed is probability if $C$ satisfies the \emph{robust no-arbitrage} condition which can be defined in the general conical case as follows.

Given a market model $C$ let $C^0_t(\omega)$ be the largest linear subspace contained in $C_t(\omega)$. Using the terminology of Kabanov, R{\'a}sonyi and Stricker~\cite{krs3} we say that $C$ is \emph{dominated} by another market model $\tilde C$ if 
\[
C_t\setminus C^0_t\subset\ri\tilde C_t\quad\text{for all}\quad t=0\pk T.
\]
\begin{definition}
A conical market model has the {\em robust no arbitrage property} if it is dominated by another conical market model that has the no arbitrage property. 
\end{definition}

When moving to general convex market models it is not immediately clear how the condition of robust no-arbitrage should be extended in order to have the closedness of $A_T(C)$. Indeed, in general convex models even the traditional notion of arbitrage has two natural extensions, one being the possibility of making something out of nothing the other one being the possibility of making arbitrarily much out of nothing; see \cite{pen8}. These correspond to the notions of the {\em tangent cone} and the {\em recession cone} from convex analysis. It turns out to be the latter one which is more relevant for closedness of $A_T(C)$; see \cite{pen8b} for the case of claims with cash-delivery. This is, in fact, suggested already by classical closedness criteria in convex analysis; see \cite[Chapter~8]{roc70a}.

Given a convex market model $C$, let
\[
C^\infty_t(\omega) = \{x\in\reals^d\,|\, C_t(\omega)+\alpha x\subset C_t(\omega)\ \forall\alpha >0\}.
\]
This is a closed convex cone known as the {\em recession cone} of $C_t(\omega)$; see \cite[Chapter~8]{roc70a}. The recession cone describes the asymptotic behavior of a convex set infinitely far from the origin. Since $C_t(\omega)$ is a closed convex set containing $\reals^d_-$ we have, by \cite[Theorem~8.1, Theorem~8.2, Corollary~8.3.2, Theorem~8.3]{roc70a}, that $C^\infty$ is a closed convex cone containing $\reals^d_-$ and 
\[
C^\infty_t(\omega) = \bigcap_{\alpha>0}\alpha C_t(\omega)
\]
and
\begin{equation}\label{recession}
C^\infty_t(\omega) = \{x\in\reals^d\,|\,\exists x^n\in C_t(\omega),\ \alpha^n\downto 0,\ \text{with}\ \alpha^nx^n\to x\}.
\end{equation}
By \cite[Exercise~14.21]{rw98}, the set-valued mappings $\omega\mapsto C_t^\infty(\omega)$ are $\F_t$-measurable so they define a convex conical market model in the sense of Definition~\ref{def:mm}. 

\begin{definition}
A convex market model has the {\em robust no scalable arbitrage property} if $C^\infty$ has the robust no arbitrage property. 
\end{definition}

The term ``scalable arbitrage'' refers to arbitrage opportunities that may be scaled by arbitrarily large positive numbers to yield arbitrarily ``large'' arbitrage opportunities; see \cite{pen8}. Such scalable arbitrage opportunities can be related to the market model $C^\infty$ much as in \cite[Proposition~17]{pen8}.

We are now ready to state our main result the proof of which can be found in the last section.

\begin{theorem}\label{thm:cl}
If $C$ is a convex market model with the robust no scalable arbitrage property then $A_T(C)$ is closed in probability.
\end{theorem}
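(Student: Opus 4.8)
The plan is to carry out, in the convex setting, the induction of Kabanov--Stricker and Schachermayer \cite{sch4} for conical models, with the recession cones $C^\infty_t$ playing the role that the solvency cones play there; the assertion is, in effect, the adapted analogue of the finite-dimensional closedness criterion for sums of convex sets in \cite[Chapter~9]{roc70a}. I would argue by induction on $T$, proving the slightly stronger statement that $A_T(C)$ is closed in probability and that its recession cone — in the sense of the $L^0$-version of \eqref{recession} — equals $A_T(C^\infty)=L^0(C^\infty_0,\F_0)+\cdots+L^0(C^\infty_T,\F_T)$. The case $T=0$ is immediate: $A_0(C)=L^0(C_0,\F_0)$ is closed because $C_0$ is a closed-valued $\F_0$-measurable mapping (take $P$-a.s.\ limits and use measurable selection), and the recession-cone identity is the $L^0$-form of the pointwise fact that the recession cone of $C_0$ is $C^\infty_0$. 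For the inductive step, recalling that a convex subset of $L^0$ is closed in probability once it contains the $P$-a.s.\ limit of every $P$-a.s.\ convergent sequence of its elements, I take $\xi^n\in A_T(C)$ with $\xi^n\to\xi$ $P$-a.s., fix decompositions $\xi^n=\sum_{t=0}^T\xi^n_t$ with $\xi^n_t\in L^0(C_t,\F_t)$, and must exhibit such a decomposition for $\xi$.

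Everything reduces to controlling the date-$0$ increments $\xi^n_0$. On the \emph{$\F_0$-measurable} event $\{\liminf_n\|\xi^n_0\|<\infty\}$, a random-subsequence extraction with $\F_0$-measurable subsequence indices together with the closedness of $C_0$ yields $\hat\xi_0\in L^0(C_0,\F_0)$ with $\xi^{n_k}_0\to\hat\xi_0$; then $\xi^{n_k}-\xi^{n_k}_0=\sum_{t\ge1}\xi^{n_k}_t$ converges to $\xi-\hat\xi_0$, which by the induction hypothesis — applied to the model $(C_t)_{t=1}^T$, whose recession cones are $(C^\infty_t)_{t=1}^T$ and which inherits the robust no scalable arbitrage property since both domination and no arbitrage survive the deletion of the initial date — lies in $L^0(C_1,\F_1)+\cdots+L^0(C_T,\F_T)$; hence $\xi\in A_T(C)$ on this event. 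It remains to treat the complementary event $\{\|\xi^n_0\|\to\infty\}$, again $\F_0$-measurable, so that multiplying every $\xi^n_t$ by its indicator preserves adaptedness and I may assume $\|\xi^n_0\|\to\infty$ $P$-a.s. Put $\eta^n_t:=\xi^n_t/\|\xi^n_0\|$, so that $\|\eta^n_0\|=1$ and $\sum_{t\ge1}\eta^n_t=\xi^n/\|\xi^n_0\|-\eta^n_0$; along a random subsequence $\eta^{n_k}_0\to\eta_0$ with $\|\eta_0\|=1$, whence $\sum_{t\ge1}\eta^{n_k}_t\to-\eta_0$ since $\xi^n$ is $P$-a.s.\ bounded while $\|\xi^n_0\|\to\infty$. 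By the representation \eqref{recession} of the recession cone (written for $\xi^{n_k}_0\in C_0$ with scaling factors $1/\|\xi^{n_k}_0\|\downto0$), $\eta_0\in L^0(C^\infty_0,\F_0)$; and $-\eta_0$ lies in the recession cone of the set $A_{T-1}(C):=L^0(C_1,\F_1)+\cdots+L^0(C_T,\F_T)$, which — being closed in probability by the induction hypothesis — equals $A_{T-1}(C^\infty)=L^0(C^\infty_1,\F_1)+\cdots+L^0(C^\infty_T,\F_T)$. So $\eta_0+\sum_{t\ge1}\mu_t=0$ for suitable $\mu_t\in L^0(C^\infty_t,\F_t)$: a nontrivial null combination of recession-cone selectors.

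At this point the hypothesis enters, and the key point is the conical lemma asserting that \emph{if $K$ is a conical market model dominated by a conical model $\tilde K$ with the no arbitrage property and $\zeta_0+\cdots+\zeta_T=0$ with $\zeta_t\in L^0(K_t,\F_t)$, then every $\zeta_t$ lies in $L^0(K^0_t,\F_t)$}, where $K^0_t$ is the largest linear subspace contained in $K_t$. Applied to $K=C^\infty$, which has the robust no arbitrage property by assumption, this gives $\eta_0\in L^0((C^\infty_0)^0,\F_0)$ and each $\mu_t\in L^0((C^\infty_t)^0,\F_t)$; in particular $\pm\eta_0\in C^\infty_0$ and $\pm\mu_t\in C^\infty_t$, so that $\xi^n_0+\lambda\eta_0\in C_0$ and $\xi^n_t-\lambda\mu_t\in C_t$ for every $\lambda\in\reals$ and $t\ge1$, by the very definition of the recession cone. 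Taking the $\F_0$-measurable scalar $\lambda^n:=(\xi^n_0\cdot\eta_0)/\|\eta_0\|^2$ and replacing $\xi^n_0$ by $\xi^n_0-\lambda^n\eta_0$ and each $\xi^n_t$ $(t\ge1)$ by $\xi^n_t+\lambda^n\mu_t$ produces — because $\eta_0+\sum_{t\ge1}\mu_t=0$ — another valid decomposition of the \emph{same} $\xi^n$, whose new date-$0$ increment is orthogonal to $\eta_0$ and hence confined to the $(d-1)$-dimensional subspace $\eta_0^\perp$. Iterating this reduction at most $d$ times brings the date-$0$ increments back into the bounded regime already handled, completing the induction.

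The main obstacle is establishing the conical lemma at this level of generality. For polyhedral solvency cones over a general probability space it is, in substance, Schachermayer's key technical lemma \cite{sch4}; with arbitrary closed convex cones one must argue through relative interiors and measurable selections rather than through finitely many faces, and this is part of what the paper contributes. The argument should nonetheless follow the familiar pattern: reducing by induction on the time index to showing $\zeta_0\in L^0(K^0_0,\F_0)$, one observes that if $\zeta_0\notin K^0_0$ on a set $G\in\F_0$ with $P(G)>0$ then, by domination, $\zeta_0\in\ri\tilde K_0$ on $G$, so $\zeta_0+\delta e_i\in\tilde K_0$ there for some $\F_0$-measurable $\delta>0$ and some coordinate unit vector $e_i$, and then
\[
\delta e_i\,\mathbf 1_G=(\zeta_0+\delta e_i)\,\mathbf 1_G+\sum_{t=1}^{T}\zeta_t\,\mathbf 1_G
\]
is a nonzero nonnegative element of $A_T(\tilde K)$, contradicting the no arbitrage property of $\tilde K$. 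The residual work — substantial but routine — is the measurability bookkeeping needed to keep the random subsequences, the indicators, and the subtracted adapted processes inside the relevant spaces $L^0(\cdot,\F_t)$, together with the parallel (and easier) normalization argument establishing the recession-cone identity for $A_{T-1}(C)$ invoked above.
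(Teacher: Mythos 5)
Your proposal uses the same overall machinery as the paper's proof — induction on $T$, Kabanov--Stricker random subsequence extraction, normalization by $\|\xi^n_0\|$ on the event where the date-$0$ increments blow up, the representation~\eqref{recession} of the recession cone, and what you call the ``conical lemma'' — and that lemma is indeed exactly the paper's Lemma~\ref{lem:zero} in disguise (via the telescoping identity $\sum_t\Delta x_t=x_T$; it is Lemma~5 of \cite{krs3}, and your sketch of it is sound because $\tilde K_t\supset\reals^d_-$ forces $\ri\tilde K_t=\inte\tilde K_t$). Where you diverge is in how the null directions are neutralized. You orthogonalize on the fly: at each induction step, once a unit null direction $\eta_0\in(C^\infty_0)^0$ is produced, you subtract its component from $\xi^n_0$ and repeat, terminating after at most $d$ passes because the successive $\eta_0$'s are orthonormal. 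The paper instead pre-orthogonalizes once and for all: Lemma~\ref{lem:proj} builds $\F_t$-measurable linear subspaces $N_t$ with $L^0(N_t,\F_t)=\N_t$, and Lemma~\ref{lem:ort} shows one may assume from the outset that every hedging strategy satisfies $x_t\in N_t^\perp$. Then in the blow-up case the normalized limit $x$ lies in $\N(C^\infty)=\N(C^0)$, so $x_0\in N_0\cap N_0^\perp=\{0\}$, contradicting $|x_0|=1$ with no iteration and no dimension counting. A second organizational difference: you carry the extra assertion ``recession cone of $A_T(C)$ in $L^0$ equals $A_T(C^\infty)$'' through the induction in order to conclude $-\eta_0\in A_{T-1}(C^\infty)$; the paper avoids this ancillary claim by re-running the bounded Case~1 machinery on the scaled sequence $(\alpha^n x^n,\alpha^n c^n)$, obtaining componentwise limits and then applying~\eqref{recession} to each $\Delta x_t$ separately. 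Both work, but the paper's version is leaner.

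Two small points. There is a sign slip in the reduction step: with $\eta_0+\sum_{t\ge1}\mu_t=0$, replacing $\xi^n_0$ by $\xi^n_0-\lambda^n\eta_0$ requires replacing $\xi^n_t$ by $\xi^n_t-\lambda^n\mu_t$ (not $+\lambda^n\mu_t$) for the sum to remain $\xi^n$; the feasibility $\xi^n_t-\lambda^n\mu_t\in C_t$ still holds since $(C^\infty_t)^0$ is a subspace. And the ``substantial but routine'' bookkeeping you defer is genuinely where the paper invests effort — Lemma~\ref{lem:proj} (using a measurable-selection result to produce the $N_t$ as measurable closed-valued subspace maps) and Lemma~\ref{lem:ort} are precisely the price paid to make the up-front orthogonalization rigorous, and your iterative version, while avoiding these lemmas, trades them for the need to partition $\Omega$ into $\F_0$-measurable events at every pass and to track $\F_0$-measurable random subsequence indices through up to $d$ nested extractions. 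So the proposal is correct in substance, but the paper's pre-orthogonalization is the tidier route.
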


\begin{remark}\label{rem:arb}
If $C$ is conical, we have $C^\infty_t(\omega)=C_t(\omega)$ and Theorem~\ref{thm:cl} coincides with \cite[Lemma~2]{krs3} which extends \cite[Theorem~2.1]{sch4} to general conical models.
\end{remark}

\begin{remark}
For general convex models $C^\infty_t(\omega)\subset C_t(\omega)$ and the condition in Theorem~\ref{thm:cl} may be satisfied even if $C$  fails the no-arbitrage condition. Consider for example a deterministic model where $\Omega$ is a singleton and $C_t(\omega)=\reals_-^d+\uball$ for every $t$. Here $\uball$ denotes the unit ball of $\reals^d$. We get,
\[
A_T(C)=L^0(C_0,\F_0)+\ldots+L^0(C_T,\F_T)=\reals_-^d+(T+1)\uball
\]
so $C$ does not have the no-arbitrage property. On the other hand, $C_t^\infty(\omega)=\reals_-^d$ is dominated by $\tilde C_t(\omega)=\{x\in\reals^d\,|\,\sum_{i=1}^dx^i\le 0\}$ which does have the no-arbitrage property. Indeed, $A_T(\tilde C)=\{x\in\reals^d\,|\,\sum_{i=1}^dx^i\le 0\}$ so $A_T(\tilde C)\cap L^0_+=\{0\}$.
\end{remark}

\section{Superhedging}\label{sec:sh}

A contingent claim with {\em physical delivery} is a security that, at some future time, gives its owner a random portfolio of securities (instead of a single security like in the case of cash-delivery). A {\em contingent claim process with physical delivery} $c=(c_t)_{t=0}^T$ is a security that, at each time $t=0,\ldots,T$, gives its owner an $\F_t$-measurable random portfolio $c_t\in\reals^d$. The set of $\reals^d$-valued adapted process will be denoted by $\A$.

Given a market model $C$, we will say that a process $p\in\A$ is a {\em superhedging premium process} for a claim process $c\in\A$ if there is a portfolio process $x\in\A$ such that $x_T=0$ and 
\[
x_t-x_{t-1}+c_t-p_t\in C_t
\]
almost surely for every $t=0,\ldots,T$. The requirement that $x_T=0$ means that everything is liquidated at the terminal date. One could relax this condition to $x_T\ge 0$ but since $\reals_-^d\subset C_T$ it would amount to the same thing.

We use claim and premium processes (rather than claims and premiums (prices) with pay-outs only at the end and the beginning) in the present paper mainly for mathematical convenience. However, when moving to market models with portfolio constraints it is essential to distinguish between payments at different points in time, and then claim processes become the natural object of study; see \cite{pen8,pen8b}. Claim and premium processes are common in various insurance applications where payments are made e.g.\ annually.

The superhedging condition can be written as 
\[
c-p\in A(C),
\]
where
\[
A(C) = \{c\in\A\,|\,\exists x\in\A:\ x_t-x_{t-1}+c_t\in C_t,\ x_T=0\},
\]
is the set of all claim processes with physical delivery that can be superhedged without any investment. It is easily checked that $A(C)$ is convex (conical) when $C$ is convex (conical).

\begin{lemma}\label{lem:ata}
Let $C$ be a convex market model.
\begin{enumerate}
\item
The sets $A(C)$ and $A_T(C)$ are related through
\begin{align*}
A_T(C) &= \{c_T\,|\,(0,\ldots,0,c_T)\in A(C)\},\\
A(C) &= \{(c_0,\ldots,c_T)\,|\, \sum_{t=0}^Tc_t\in A_T(C)\}.
\end{align*}
\item
The set $A(C)$ is closed in probability if and only if $A_T(C)$ is closed in probability.
\item
$C$ has the no-arbitrage property if and only if $A(C)\cap\A_+=\{0\}$.
\end{enumerate}
\end{lemma}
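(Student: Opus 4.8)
The proof is a sequence of elementary but careful translations between the ``process'' formulation $A(C)$ and the ``terminal value'' formulation $A_T(C)$. For part (1), I would prove the two identities separately. For the first, given $c_T$ with $(0,\ldots,0,c_T)\in A(C)$, unwind the definition: there is $x\in\A$ with $x_t-x_{t-1}\in C_t$ for $t<T$, $x_{T}-x_{T-1}+c_T\in C_T$, and $x_T=0$. Then $x$ restricted to times $0,\ldots,T-1$ is a self-financing portfolio process, and $-x_{T-1}+c_T=x_T-x_{T-1}+c_T\in C_T$, so adding this increment shows $c_T=(c_T-x_{T-1})+x_{T-1}$ is the terminal value of a self-financing process (namely $x_0,\ldots,x_{T-1},c_T$); hence $c_T\in A_T(C)$. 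Conversely, if $c_T=y_T\in A_T(C)$ for a self-financing $y$, set $x_t=y_t$ for $t<T$ and $x_T=0$; then $x_T-x_{T-1}+c_T=-y_{T-1}+y_T=\Delta y_T\in C_T$ and the earlier increments are unchanged, so $(0,\ldots,0,c_T)\in A(C)$. For the second identity, given $c\in A(C)$ with witness $x$, telescoping $\sum_{t=0}^T(x_t-x_{t-1}+c_t)=x_T+\sum_t c_t=\sum_t c_t$ (using $x_T=0$, $x_{-1}=0$), and each summand lies in $C_t$, so $\sum_t c_t\in A_T(C)$; conversely, if $\sum_t c_t\in A_T(C)$ via self-financing $y$, define $x_t=y_t-\sum_{s=0}^t c_s$ for $t<T$ and $x_T=0$, and check $x_t-x_{t-1}+c_t=\Delta y_t\in C_t$ for $t<T$ while $x_T-x_{T-1}+c_T=-y_{T-1}+\sum_{s=0}^{T-1}c_s+c_T=y_T-y_{T-1}=\Delta y_T\in C_T$, so $c\in A(C)$.

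Part (3) then follows by combining part (1) with the observation that $\A_+=\{c\in\A\,|\,c_t\in L^0_+\ \forall t\}$: if $A_T(C)\cap L^0_+=\{0\}$ and $0\ne c\in A(C)\cap\A_+$, then $\sum_t c_t\in A_T(C)\cap L^0_+$ by the second identity of (1), forcing $\sum_t c_t=0$, hence (nonnegativity of each $c_t$) $c_t=0$ for all $t$, a contradiction; the reverse implication uses the first identity of (1), since $c_T\in A_T(C)\cap L^0_+$ gives $(0,\ldots,0,c_T)\in A(C)\cap\A_+$.

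Part (2) is where a genuine topological point enters, since $\A$ carries the product topology of convergence in probability coordinatewise. One direction is easy: the map $\Phi:\A\to L^0$, $\Phi(c)=\sum_{t=0}^T c_t$, is continuous, and $A(C)=\Phi^{-1}(A_T(C))$ by the second identity in (1), so closedness of $A_T(C)$ gives closedness of $A(C)$. For the converse, the map $\Psi:L^0\to\A$, $\Psi(c_T)=(0,\ldots,0,c_T)$, is also continuous, but $A_T(C)$ is not literally $\Psi^{-1}(A(C))$ unless we know the first identity in (1) — which we do — so $A_T(C)=\Psi^{-1}(A(C))$ and closedness of $A(C)$ yields closedness of $A_T(C)$. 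The only thing to be careful about is that these preimage descriptions really are exact equalities of sets, which is exactly what part (1) delivers; thus part (2) is a formal consequence of part (1) plus continuity of the two affine maps $\Phi$ and $\Psi$.

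The main obstacle, such as it is, lies in part (2): making sure that the second identity of part (1) is stated as a genuine set equality (not just an inclusion) so that $A(C)=\Phi^{-1}(A_T(C))$ holds on the nose, and similarly for $\Psi$. Once part (1) is nailed down with both inclusions in each identity, parts (2) and (3) are immediate. I would therefore spend essentially all the effort on the four inclusions in part (1) and treat the rest as corollaries.
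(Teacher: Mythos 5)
Your proposal is correct and follows essentially the same route as the paper: prove part (1) by explicit construction of the bijection between witnesses for $A(C)$ and self-financing processes via $\tilde x_t = x_t + \sum_{s\le t}c_s$ (your ``converse'' substitution $x_t = y_t - \sum_{s\le t}c_s$ is exactly the inverse of the paper's $\tilde x_t = \tilde x_{t-1}+x_t-x_{t-1}+c_t$), then deduce (2) and (3) as formal consequences of the set identities. The paper is terser (it calls the first identity ``immediate'' and leaves the deduction of (2) and (3) to the reader), but the content and the key substitution are identical.
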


\begin{proof}
It suffices to prove the first part since the other two follow from that. The first equation is immediate. As to the second, we have $c\in A(C)$ iff there is an $x\in\A$ such that $x_T=0$ and $x_t-x_{t-1}+c_t\in C_t$. Defining $\tilde x_{-1}=0$ and $\tilde x_t=\tilde x_{t-1}+x_t-x_{t-1}+c_t$ we get that $\tilde x\in\A$ is self-financing and $\tilde x_T=\sum_{t=0}^Tc_t$. This just means that $\sum_{t=0}^Tc_t\in A_T(C)$.
\end{proof}

\begin{example}\label{ex:static}
The classical case of a single premium payment at the beginning and single claim payment at maturity corresponds to $p=(p_0,0,\ldots,0)$ and $c=(0,\ldots,0,c_T)$. In this case, Lemma~\ref{lem:ata} says that $p$ is a superhedging premium for $c$ if and only if
\begin{equation}\label{hs}
c_T-p_0\in A_T(C).
\end{equation}
%Indeed, $(p_0,0,\ldots,0)$ is a superhedging premium for $(0,\ldots,0,c_T)$ iff 
%\begin{align*}
%x_0-p_0&\in C_0,\\
%x_t-x_{t-1}&\in C_t\quad t=1,\ldots,T-1,\\
%-x_{T-1}+c_T&\in C_T.
%\end{align*}
%Defining the trading strategy $\tilde x\in\A$ by $\tilde x_t:=x_t-p_0$ for $t=0,\ldots,T-1$ and $\tilde x_T:=c_T-p_0$, we get $\tilde x_t-\tilde x_{t-1}\in C_t$ for every $t=0,\ldots,T$. 
\end{example}

Dual characterizations of the set of all initial endowments satisfying condition \eqref{hs} have been given in the conical case as in Example \ref{ex:kab} in \cite{ks1b}, \cite{dkv2}, \cite{krs2}, \cite{ipen1} and \cite{sch4}. In conical models the superhedging endowments are characterized in terms of the same dual elements that characterize the no-arbitrage condition. When moving to nonconical models, a larger class of dual variables is needed in order to capture the structure of the sets; see \cite{pen8b} for the case of claims with cash delivery. 

By Lemma~\ref{lem:ata}, the set $A(C)$ is closed if and only if $A_T(C)$ is closed. Combining Theorem~\ref{thm:cl} with classical techniques of convex analysis, we can derive dual characterizations of superhedging premium process in terms of martingales much as in \cite{pen8b} in the case of claims with cash delivery. Consider the Banach space 
\[
\A^1:=\lk x\in\A\mk x_t\in L^1(P)\quad\mbox{for all}\;t=0\pk T\rk
\]
and its dual
\[
\A^\infty:=\lk x\in\A\mk x_t\in L^\infty(P)\quad\mbox{for all}\;t=0\pk T\rk.
\]
One can then use the classical bipolar theorem to characterize the superhedging condition in terms of the {\em support function} $\sigma_{A^1(C)}:\A^\infty\to\ereals$ of the set $A^1(C):=A(C)\cap\A^1$ of integrable claim processes. The support function is given by
\[
\sigma_{A^1(C)}(y) = \sup_{c\in A^1(C)}E\sum_{t=0}^Tc_t\cdot y_t.
\]
The lemma below expresses $\sigma_{A^1(C)}$ in terms of the support functions of the random sets $C_t(\omega)$
\[
\sigma_{C_t(\omega)}(y) = \sup_{x\in C_t(\omega)}x\cdot y.
\]
By \cite[Example~14.51]{rw98} the function $\sigma_{C_t}:\Omega\times\reals^d\to\ereals$ is an $\F_t$-measurable normal integrand (see \cite[Definition~14.27]{rw98}). This implies in particular that $\sigma_{C_t(\omega)}(y_t)$ is an  $\F_t$-measurable function whenever $y_t$ is $\F_t$-measurable. The following corresponds to \cite[Lemma~28]{pen8}.

\begin{lemma}\label{lem:support} 
Let $C$ be a convex market model and let $y\in\A^\infty$. Then
\[
\sigma_{A^1(C)}(y) = 
\begin{cases}
E\sum_{t=0}^T\sigma_{C_t}(y_t) & \text{if $y$ is a nonnegative martingale},\\
+\infty & \text{otherwise}.
\end{cases}
\]
\end{lemma}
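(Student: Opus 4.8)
The plan is to establish both inequalities in the claimed formula, starting with the easy direction and then treating the reverse. First observe that the support function of $A^1(C)$ decomposes along the ``single payment'' descriptions of the set: since $c \in A^1(C)$ iff there is $x \in \A^1$ (we may take $x$ integrable because the $c_t$ and the increments can be chosen integrable by truncation, using $\reals^d_- \subset C_t$) with $x_T = 0$ and $\Delta x_t + c_t \in C_t$, we can write $c_t = (\Delta x_t + c_t) - \Delta x_t$ and hence
\[
E\sum_{t=0}^T c_t \cdot y_t = E\sum_{t=0}^T (\Delta x_t + c_t)\cdot y_t - E\sum_{t=0}^T \Delta x_t \cdot y_t .
\]
The first term is bounded above by $E\sum_{t=0}^T \sigma_{C_t}(y_t)$ because $\Delta x_t + c_t \in C_t$ and $\sigma_{C_t(\omega)}(y_t)$ is exactly the pointwise supremum over $C_t(\omega)$; here one uses that this is $\F_t$-measurable (as recalled before the lemma) so that conditioning is legitimate. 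For the second term, Abel summation (using $x_{-1}=0$, $x_T=0$) gives $\sum_{t=0}^T \Delta x_t\cdot y_t = -\sum_{t=0}^{T-1} x_t\cdot(y_{t+1}-y_t)$, so if $y$ is a martingale this term vanishes after taking expectations (conditioning $x_t\cdot(y_{t+1}-y_t)$ on $\F_t$), yielding $\sigma_{A^1(C)}(y) \le E\sum_t \sigma_{C_t}(y_t)$.

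Next I would rule out the two failure modes that force the value to be $+\infty$. If $y$ is not a martingale, pick $t$ with $E[y_{t+1}\,|\,\F_t]\ne y_t$ on a set $B\in\F_t$ of positive probability; take the portfolio process that holds the position $x_t = \lambda\,\mathbf{1}_B\,v$ at time $t$ only (for a suitable bounded $\F_t$-measurable direction $v$ making $v\cdot E[y_{t+1}-y_t\,|\,\F_t] > 0$ on $B$), liquidated at $t+1$, together with the free claim stream that supports it; scaling $\lambda\to\infty$ drives $E\sum_t c_t\cdot y_t\to+\infty$. This works because such a portfolio process needs no genuine investment: the increments not lying in $C_t$ are absorbed into the claim $c$, and each $\Delta x_t$ can be dumped using $\reals^d_-\subset C_t$ at a cost that is linear in $\lambda$ while the gain from non-martingality is also linear but with positive coefficient — actually the gain is linear and the cost is $O(1)$ once we route things through the free directions, so the ratio blows up. Similarly, if $y$ is a martingale but fails to be nonnegative on a set of positive measure, then since $\reals^d_-\subset C_t(\omega)$ we have $L^0(-\reals^d_+,\F_t)\subset A^1(C)$ componentwise, so taking $c_t = -\mathbf{1}_D e_i$ with $D\in\F_t$ on which $y_t^i<0$ and scaling gives value $+\infty$.

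The reverse inequality in the martingale, nonnegative case — showing $\sigma_{A^1(C)}(y)\ge E\sum_t\sigma_{C_t}(y_t)$ — is the crux and the step I expect to be the main obstacle. The idea is a measurable selection argument: for each $t$ choose an $\F_t$-measurable selector $\xi_t$ of $C_t$ that nearly attains $\sigma_{C_t(\omega)}(y_t(\omega))$ pointwise (possible because $\sigma_{C_t}$ is an $\F_t$-measurable normal integrand, via \cite[Theorem~14.60]{rw98} or the attainment/approximation results for normal integrands), then truncate $\xi_t$ on large sets to make it integrable while controlling the loss in the integral. Setting $c_t = \xi_t$ and $x_t = \sum_{s\le t}\xi_s$, one gets $x_T = \sum_t\xi_t$, which need not be zero; to repair this, subtract a liquidating correction at time $T$, i.e. replace $c_T$ by $\xi_T - x_{T-1}$ if $-x_{T-1}\in C_T$... but that is not free in general. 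The clean fix is to recall Lemma~\ref{lem:ata}: $A^1(C)$ is cofinal, for the purposes of computing $E\sum_t c_t\cdot y_t$ with $y$ a martingale, with the set $\{(\xi_0,\ldots,\xi_T) : \xi_t \in L^1(C_t,\F_t)\}$, because any $(\xi_t)$ with $\xi_t\in C_t$ can be adjusted by a self-financing ``transport'' process whose net effect on $E\sum c_t\cdot y_t$ is zero (again Abel summation plus the martingale property) while making the terminal value vanish. Thus $\sigma_{A^1(C)}(y)\ge \sup_{\xi_t\in L^1(C_t,\F_t)} E\sum_t\xi_t\cdot y_t = \sum_t \sup_{\xi_t} E[\xi_t\cdot y_t] = \sum_t E[\sigma_{C_t}(y_t)]$, where the last equality is the interchange of supremum and integral for normal integrands (\cite[Theorem~14.60]{rw98}). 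Combining the two inequalities gives the formula; the measurability and integrable-truncation bookkeeping in the selection step is where the real work lies.
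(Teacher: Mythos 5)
Your overall route coincides with the paper's: Abel summation for the upper bound, $\reals^d_-\subset C_t$ for the nonnegativity of $y$, and \cite[Theorem~14.60]{rw98} for the interchange of supremum and expectation in the lower bound. The upper bound and the nonnegativity/non-martingale branches are essentially right (though the paper dispenses with your explicit scaling construction by simply observing that $x\mapsto E\sum_{t=0}^{T-1}x_t\cdot(y_{t+1}-y_t)$ is a linear functional over all of $\A^1$ and therefore has supremum $0$ if $y$ is a martingale and $+\infty$ otherwise). Your parenthetical claim that the hedging process $x$ can always be taken integrable ``by truncation'' is unsubstantiated, but it is also unnecessary: the paper instead enlarges to $c\in\A$ with the stated convention for expectations, which sidesteps the issue.

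The real problem is in your ``crux'' paragraph. Having chosen $\xi_t\in L^1(C_t,\F_t)$, you want $(\xi_0,\ldots,\xi_T)\in A^1(C)$. Your proposed hedging process $x_t=\sum_{s\le t}\xi_s$ does not even satisfy the constraint: $\Delta x_t+c_t=2\xi_t$, which need not lie in $C_t$ unless $C_t$ is a cone. You then misdiagnose the difficulty as $x_T\ne 0$ and reach for a vaguely described ``transport'' adjustment. In fact the correct hedging process is the trivial one, $x\equiv 0$: then $\Delta x_t+c_t=\xi_t\in C_t$ and $x_T=0$, so $(\xi_0,\ldots,\xi_T)\in A^1(C)$ immediately, with no appeal to Lemma~\ref{lem:ata}, cofinality, or any correction. (If you do want to go through Lemma~\ref{lem:ata}, the right statement is that $\tilde x_t:=\sum_{s\le t}\xi_s$ is a self-financing process with terminal value $\sum_t\xi_t$, hence $\sum_t\xi_t\in A_T(C)$; but note $A_T(C)$ imposes no terminal-zero condition, so your worry about $x_T\ne 0$ is misplaced there too.) The paper avoids this confusion by working directly with the representation $c_t=w_t-\Delta x_t$, $w_t\in C_t$, and splitting the joint supremum over $(w,x)$ into a sum of two independent suprema, one of which is handled by \cite[Theorem~14.60]{rw98} and the other by the martingale property.
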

\begin{proof}
In the following, we define the expectation of an arbitrary random variable $\varphi$ by setting it equal to $-\infty$ if the negative part of $\varphi$ is not integrable (the remaining cases being defined unambiguously as real numbers or as $+\infty$).

On one hand,
\begin{align*}
\sigma_{A^1(C)}(y) &= \sup\{E\sum_{t=0}^Tc_t\cdot y_t\,|\, x\in\A,\ c\in\A^1:\ x_t-x_{t-1}+c_t\in C_t\}\\
&\le \sup\{E\sum_{t=0}^Tc_t\cdot y_t\,|\, x,c\in\A:\ x_t-x_{t-1}+c_t\in C_t\}\\
&=\sup\{E\sum_{t=0}^T(w_t-x_t+x_{t-1})\cdot y_t\,|\, x,w\in\A:\ w_t\in C_t\}\\
&\le\sup\{E\sum_{t=0}^T[\sigma_{C_t}(y_t)+(x_{t-1}-x_t)\cdot y_t]\,|\, x\in\A\}\\
&= E\sum_{t=0}^T\sigma_{C_t}(y_t) + \sup_{x\in\A}E\sum_{t=0}^T(x_{t-1}-x_t)\cdot y_t\\
&=E\sum_{t=0}^T\sigma_{C_t}(y_t) + \sup_{x\in\A}E\sum_{t=0}^{T-1}x_t\cdot(y_{t+1}-y_t),
\end{align*}
where the last term vanishes if $y$ is a martingale and equals $+\infty$ otherwise. On the other hand,
\begin{align*}
\sigma_{A^1(C)}(y) &= \sup\{E\sum_{t=0}^Tc_t\cdot y_t\,|\, x\in\A,\ c\in A^1:\ x_t-x_{t-1}+c_t\in C_t\}\\
&\ge\sup\{E\sum_{t=0}^Tc_t\cdot y_t\,|\, x,c\in\A^1:\ x_t-x_{t-1}+c_t\in C_t\}\\
&=\sup\{E\sum_{t=0}^T(w_t-x_t+x_{t-1})\cdot y_t\,|\, x,w\in\A^1:\ w_t\in C_t\}\\
&=\sup\{E\sum_{t=0}^Tw_t\cdot y_t\,|\, w\in\A^1:\ w_t\in C_t\} + \sup_{x\in\A^1} E\sum_{t=0}^T(x_{t-1}-x_t)\cdot y_t\\
&=\sum_{t=0}^T\sup\{Ew_t\cdot y_t\,|\, w_t\in L^1(\F_t):\ w_t\in C_t\} + \sup_{x\in\A^1}E\sum_{t=0}^{T-1}x_t\cdot(y_{t+1}-y_t). 
\end{align*}
Here again the last term vanishes if $y$ is a martingale and equals $+\infty$ otherwise. By \cite[Theorem~14.60]{rw98}, 
\[
\sup\{Ew_t\cdot y_t\,|\, w_t\in L^1(\F_t):\ w_t\in C_t\} = E\sigma_{C_t}(y_t),
\]
which proves that
\[
\sigma_{A^1(C)}(y) = 
\begin{cases}
E\sum_{t=0}^T\sigma_{C_t}(y_t) & \text{if $y$ is a martingale},\\
+\infty & \text{otherwise}.
\end{cases}
\]
Moreover, since $\reals_-^d\subseteq C_t$ for all $t$, we have $\sigma_{C_t}(y_t)=\infty$ on the set $\{y_t\notin\reals_+^d\}$. This completes the proof.
\end{proof}

\begin{theorem}\label{thm:hedging}
Assume that the convex market model $C$ has the robust no scalable arbitrage property and let $c,p\in\A$ be such that $c-p\in\A^1$. Then the following are equivalent
\begin{itemize}
\item[(i)]
$p$ is a superhedging premium process for $c$.
\item[(ii)]
$E\sum_{t=0}^T(c_t-p_t)\cdot y_t \le 1$ for every bounded nonnegative martingale $y=(y_t)_{t=0}^T$ such that $E\sum_{t=0}^T\sigma_{C_t}(y_t)\le 1$.
\item[(iii)]
$E\sum_{t=0}^T(c_t-p_t)\cdot y_t \le E\sum_{t=0}^T\sigma_{C_t}(y_t)$ for every bounded nonnegative martingale $y=(y_t)_{t=0}^T$.
\end{itemize}
\end{theorem}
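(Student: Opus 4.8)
The plan is to recast condition (i) as membership of $c-p$ in the convex set $A^1(C)=A(C)\cap\A^1$ of integrable superhedgeable claim processes, and then to read off (ii) and (iii) from the bipolar theorem applied to $A^1(C)$ in the dual pair $(\A^1,\A^\infty)$, using Lemma~\ref{lem:support} to make the dual description explicit. First I would record that $A^1(C)$ is a closed convex subset of $\A^1$ containing the origin: convexity follows from convexity of $A(C)$, and $0\in A^1(C)$ since $\reals^d_-\subset C_t$ for every $t$ (take the zero portfolio process). For closedness, Theorem~\ref{thm:cl} together with Lemma~\ref{lem:ata} gives that $A(C)$ is closed in probability, and since convergence in $\A^1$ implies convergence in probability, $A^1(C)=A(C)\cap\A^1$ is norm-closed in $\A^1$. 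Finally, since $c-p\in\A^1$ by hypothesis and $p$ is a superhedging premium process for $c$ exactly when $c-p\in A(C)$, condition (i) is equivalent to $c-p\in A^1(C)$.

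For (i)$\Leftrightarrow$(iii): by definition of the support function, $c-p\in A^1(C)$ implies $E\sum_{t=0}^T(c_t-p_t)\cdot y_t=\langle c-p,y\rangle\le\sigma_{A^1(C)}(y)$ for every $y\in\A^\infty$, and by Lemma~\ref{lem:support} the right-hand side equals $E\sum_{t=0}^T\sigma_{C_t}(y_t)$ whenever $y$ is a bounded nonnegative martingale, which is (iii). Conversely, since $A^1(C)$ is closed and convex, Hahn--Banach separation shows it is the intersection of the closed half-spaces $\{z\in\A^1\,|\,\langle z,y\rangle\le\sigma_{A^1(C)}(y)\}$, $y\in\A^\infty$, so $c-p\in A^1(C)$ as soon as $\langle c-p,y\rangle\le\sigma_{A^1(C)}(y)$ for all $y\in\A^\infty$; by Lemma~\ref{lem:support} this is precisely (iii), the constraint being vacuous unless $y$ is a bounded nonnegative martingale and the left-hand side being automatically finite because $c-p\in\A^1$ and $y\in\A^\infty$.

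For (ii)$\Leftrightarrow$(iii): the implication (iii)$\Rightarrow$(ii) is immediate on restricting to those $y$ with $E\sum_{t=0}^T\sigma_{C_t}(y_t)\le 1$. For (ii)$\Rightarrow$(iii), let $y$ be a bounded nonnegative martingale and put $v=E\sum_{t=0}^T\sigma_{C_t}(y_t)\in[0,\infty]$ (it is nonnegative since $0\in C_t$ forces $\sigma_{C_t}\ge 0$): if $0<v<\infty$, apply (ii) to the bounded nonnegative martingale $y/v$, which satisfies $E\sum_{t=0}^T\sigma_{C_t}(y_t/v)=1$; if $v=0$, apply (ii) to $\lambda y$ for all $\lambda>0$ and let $\lambda\to\infty$; if $v=\infty$ there is nothing to prove since the left-hand side of (iii) is finite. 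Equivalently, one can observe that (ii) asserts exactly $c-p\in A^1(C)^{\circ\circ}$ in the pairing of $\A^1$ and $\A^\infty$, since by Lemma~\ref{lem:support} the polar $A^1(C)^\circ=\{y\,|\,\sigma_{A^1(C)}(y)\le 1\}$ is the set of bounded nonnegative martingales $y$ with $E\sum_{t=0}^T\sigma_{C_t}(y_t)\le 1$, and the bipolar theorem gives $A^1(C)^{\circ\circ}=A^1(C)$, which is (i).

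I expect the only delicate point to be the reduction to a closedness statement that the bipolar theorem can digest: one must pass from closedness in probability of $A(C)$, which is the content of Theorem~\ref{thm:cl}, to norm-closedness of $A^1(C)$ in the Banach space $\A^1$, and then keep careful track of $\ereals$-valued quantities — in particular that $\sigma_{C_t}\ge 0$, so the right-hand sides in (ii) and (iii) lie in $[0,\infty]$, and that $\langle c-p,y\rangle$ is a genuine real number for $c-p\in\A^1$, $y\in\A^\infty$, so that no indeterminate expressions arise. Everything else is routine convex duality, the substantive analytic work having been done already in Theorem~\ref{thm:cl} and Lemma~\ref{lem:support}.
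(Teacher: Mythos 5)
Your proof is correct and takes essentially the same route as the paper: reduce (i) to $c-p\in A^1(C)$, invoke Theorem~\ref{thm:cl} and Lemma~\ref{lem:ata} for closedness, apply the bipolar/support-function characterization in the dual pair $(\A^1,\A^\infty)$, and identify the polar via Lemma~\ref{lem:support}. The only cosmetic difference is that the paper proves (i)$\Leftrightarrow$(ii) directly from the bipolar theorem and then closes the cycle with (i)$\Rightarrow$(iii)$\Rightarrow$(ii), whereas you additionally spell out (ii)$\Rightarrow$(iii) by scaling and (i)$\Leftrightarrow$(iii) by half-space separation; both are the same convex-duality argument.
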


\begin{proof}
By definition, $p$ is a super hedging premium for $c$ if and only if $c-p\in A(C)$. Since $c-p\in\A^1$, this can be written as $c-p\in A^1(C)$, where $A^1(C)=A(C)\cap\A^1$ is a closed subset of $\A^1$, by Theorem~\ref{thm:cl} and Lemma~\ref{lem:ata}. Since $A^1(C)$ is also a convex set containing the origin, the bipolar theorem (see e.g.\ \cite[Theorem~5.91]{ab99}) implies that $c-p\in A^1(C)$ iff 
\[
E\sum_{t=0}^T(c_t-p_t)\cdot y_t\le 1\quad\forall y\in A^1(C)^\circ,
\]
where 
\[
A^1(C)^\circ = \{y\in\A^\infty\,|\, \sigma_{A^1(C)}(y)\le 1\}.
\]
Thus the equivalence of (i) and (ii) and the proof of (i) $\Rightarrow$ (iii) follow from Lemma \ref{lem:support}. And obviously (iii) implies (ii).
\end{proof}

If $C$ is conical, we have
\begin{equation}\label{consup}
\sigma_{C_t(\omega)}(y) = 
\begin{cases}
0 & \text{if}  \; y\in C_t^*(\omega),\\
+\infty & \text{otherwise}
\end{cases}
\end{equation}
where the closed convex cone
\[
C_t^*(\omega):=\{y\in\reals^d\,|\, x\cdot y\le 0\ \forall x\in C_t(\omega)\}
\]
is known as the {\em polar} of $C_t(\omega)$; see \cite{roc70a}. We then have $E\sum_{t=0}^T\sigma_{C_t}(y_t)<\infty$ for a $y\in\A^\infty$ if and only if $y_t\in C_t^*$ almost surely for all $t$. Thus, in the conical case, superhedging premiums can be characterized in terms the following dual elements introduced in Kabanov~\cite{kab99}.

\begin{definition}\label{cps}
An adapted $\reals^d$-valued non-zero process $y=(y_t)_{t=0}^T$ is called a {\em consistent (resp. strictly consistent) price system for a conical model} $C$ if $y$ is a martingale such that $y_t\in  C^*_t$ (resp. $y$ has strictly positive components and $y_t\in \ri C^*_t$) almost surely for all $t=0\pk T$.
\end{definition}

Note that the condition $y_t\in \ri C^*_t$ does not imply the strict positivity of $y_t$. Strict positivity of a strictly consistent price system $y$ is not explicitly required in \cite{krs2} or in \cite{sch4} but it is used and obtained in the proofs of their main results. 

Applying Theorem \ref{thm:hedging} in the conical case and making use of \eqref{hs} and \eqref{consup}, we obtain the following.

\begin{corollary}\label{cor:hedgconical}
Assume that $C$ is a conical market model with the robust no arbitrage property. Assume further that $\F_0$ is a trivial $\sigma$-algebra. Let $c_T\in L^1(P)$ and $p_0\in\reals$. Then the following are equivalent.
\begin{itemize}
\item[(i)]
$p=(p_0,0,\ldots,0)$ is a superhedging premium for $c=(0,\ldots,0,c_T)$.
\item[(ii)]
$E(c_T\cdot y_T)\le p_0\cdot y_0$ for every bounded consistent price system $y=(y_t)_{t=0}^T$.
\end{itemize}
\end{corollary}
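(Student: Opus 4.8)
The plan is to derive Corollary~\ref{cor:hedgconical} as a direct specialization of Theorem~\ref{thm:hedging} to the conical setting of Example~\ref{ex:static}. First I would observe that the hypotheses of Theorem~\ref{thm:hedging} are met: a conical market model with the robust no arbitrage property has $C^\infty_t(\omega)=C_t(\omega)$ (as noted in Remark~\ref{rem:arb}), so the robust no scalable arbitrage property reduces to the robust no arbitrage property. With $c=(0,\ldots,0,c_T)$ and $p=(p_0,0,\ldots,0)$ we have $c-p=(-p_0,0,\ldots,0,c_T)\in\A^1$ since $p_0\in\reals$ and $c_T\in L^1(P)$, so Theorem~\ref{thm:hedging} applies and (i) is equivalent to condition (iii) there, namely
\[
E\sum_{t=0}^T(c_t-p_t)\cdot y_t \le E\sum_{t=0}^T\sigma_{C_t}(y_t)
\]
for every bounded nonnegative martingale $y$.

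Next I would simplify both sides of this inequality using the conical structure and the specific form of $c$ and $p$. The left-hand side collapses to $E(c_T\cdot y_T)-p_0\cdot y_0$; here $p_0\cdot y_0$ is deterministic because $\F_0$ is trivial, so $y_0$ is a.s.\ a constant vector. For the right-hand side, formula \eqref{consup} shows that $\sigma_{C_t(\omega)}(y)$ equals $0$ when $y\in C^*_t(\omega)$ and $+\infty$ otherwise, hence $E\sum_{t=0}^T\sigma_{C_t}(y_t)$ is finite (and then zero) precisely when $y_t\in C^*_t$ a.s.\ for every $t$, i.e.\ precisely when $y$ is a consistent price system in the sense of Definition~\ref{cps} (a nonnegative bounded martingale with $y_t\in C^*_t$). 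So condition (iii) becomes: $E(c_T\cdot y_T)-p_0\cdot y_0\le 0$ for every bounded consistent price system $y$, while for bounded nonnegative martingales that are not consistent price systems the right-hand side is $+\infty$ and the inequality is automatic. This is exactly condition (ii) of the corollary.

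The only genuinely delicate point is the handling of the zero martingale $y\equiv 0$, which is a bounded nonnegative martingale but is excluded from Definition~\ref{cps} (consistent price systems are required to be non-zero). For $y\equiv 0$ condition (iii) reads $0\le 0$, which is vacuously true, so discarding it from the quantifier in passing to (ii) changes nothing; I would just remark this explicitly. One should also note that the positivity component $\reals^d_-\subset C_t$ forces $C^*_t\subset\reals^d_+$, so the ``nonnegative'' qualifier on the martingale in Theorem~\ref{thm:hedging}(iii) is automatically implied once $y_t\in C^*_t$, which is consistent with Definition~\ref{cps} not repeating it. With these remarks, the chain (i)~$\Leftrightarrow$~(iii)~$\Leftrightarrow$~(ii) is complete. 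I expect no real obstacle here; the corollary is essentially a bookkeeping exercise translating the general support-function criterion into the language of consistent price systems.
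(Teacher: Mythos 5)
Your proposal is correct and follows exactly the route the paper intends: the paper leaves the proof as a one-line sketch (``Applying Theorem~\ref{thm:hedging} in the conical case and making use of \eqref{hs} and \eqref{consup}, we obtain the following''), and you have supplied precisely the bookkeeping it alludes to, including the reduction of robust no scalable arbitrage to robust no arbitrage via $C^\infty=C$, the use of \eqref{consup} to turn the support-function bound into membership in $C_t^*$, the observation that triviality of $\F_0$ makes $p_0\cdot y_0$ deterministic, and the harmless exclusion of the zero martingale.
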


Similar characterizations were given in \cite{ks1b}, \cite{dkv2}, \cite{krs2}, \cite{ipen1} and \cite{sch4} under less restrictive integrability conditions on $c_T$. In our case the integrability condition in Theorem \ref{thm:hedging} can be relaxed in the following way. If the process $c-p$ is not $P$-integrable, we can always find a probability measure $\tilde{P}\approx P$ with bounded density such that $c_t-p_t\in L^1(\tilde{P})$ for all $t$, e.g. 
\[\frac{d\tilde{P}}{dP}:=\frac{a}{1+\sum_{t=0}^T|c_t-p_t|}, 
\]
where $a$ is a normalizing constant. Then Theorem \ref{thm:hedging} holds with $\tilde{P}$ instead of $P$ and we obtain the following corollary. 

\begin{corollary}\label{cor:hedging}
Assume that $C$ is a conical market model with the robust no arbitrage property and let $c,p\in\A$. Let further $\tilde{P}\approx P$ be a probability measure with bounded density process $z=(z_t)_{t=0}^T$ such that $c_t-p_t\in L^1(\tilde{P})$ for all $t$. Then the following are equivalent
\begin{itemize}
\item[(i)]
$p$ is a superhedging premium process for $c$.
\item[(ii)]
$E\sum_{t=0}^T(c_t-p_t)\cdot y_t \le 1$ for every bounded nonnegative $P$-martingale $y=(y_t)_{t=0}^T$ such that $E\sum_{t=0}^T\sigma_{C_t}(y_t)\le 1$ and such that $y_t/z_t:=(y_t^1/z_t\pk y_t^d/z_t)$ is almost surely bounded for all $t$.
%\item[(iii)]
%$E\sum_{t=0}^T(c_t-p_t)\cdot y_t \le E\sum_{t=0}^T\sigma_{C_t}(y_t)$ for every bounded nonnegative $P$-martingale $y=(y_t)_{t=0}^T$ such that $y_t/z_t$ is almost surely bounded for all $t$.
\end{itemize}
\end{corollary}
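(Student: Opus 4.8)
The plan is to invoke Theorem~\ref{thm:hedging} on the probability space $(\Omega,\F,\tilde P)$ and then transfer the resulting dual condition back to $P$ by a change of num\'eraire on the martingale side. First I would check that $C$ still meets the hypotheses of Theorem~\ref{thm:hedging} when $P$ is replaced by $\tilde P$. The measurability of the mappings $C_t$ and of the normal integrands $\sigma_{C_t}$ is expressed through the $\sigma$-algebras alone and is therefore unaffected. The robust no arbitrage property of the conical model $C$ (which coincides with robust no scalable arbitrage since $C^\infty=C$) is preserved as well: writing $\tilde C$ for a conical model dominating $C$ with no arbitrage, the inclusions $C_t\setminus C_t^0\subset\ri\tilde C_t$ hold pointwise in $\omega$, the sets $A_T(C)$ and $A_T(\tilde C)$ are built from almost-sure selectors, and $A_T(\tilde C)\cap L^0_+=\{0\}$ is an almost-sure statement, so all of this survives passage to the equivalent measure $\tilde P$. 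Finally, the integrability requirement $c-p\in\A^1$ of Theorem~\ref{thm:hedging}, interpreted with $\tilde P$, is exactly the assumption $c_t-p_t\in L^1(\tilde P)$ for all $t$. Thus Theorem~\ref{thm:hedging} gives that $p$ is a superhedging premium process for $c$ if and only if $E^{\tilde P}\sum_{t=0}^T(c_t-p_t)\cdot\tilde y_t\le 1$ for every bounded nonnegative $\tilde P$-martingale $\tilde y$ with $E^{\tilde P}\sum_{t=0}^T\sigma_{C_t}(\tilde y_t)\le 1$.

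The second step is the change of num\'eraire $\tilde y_t:=y_t/z_t$, equivalently $y_t=z_t\tilde y_t$. The density process $z$ is bounded by hypothesis and strictly positive ($z_t=E[\tfrac{d\tilde P}{dP}\mid\F_t]>0$ because $\tilde P\approx P$), so $y\mapsto y/z$ is a bijection between the bounded nonnegative $P$-martingales $y$ with $y_t/z_t$ almost surely bounded for every $t$ and the bounded nonnegative $\tilde P$-martingales $\tilde y$: boundedness and nonnegativity transfer both ways because $z$ is bounded and positive while $y/z$ is required bounded, and the martingale property transfers by the abstract Bayes rule, i.e.\ $y$ is a $P$-martingale if and only if $y/z$ is a $\tilde P$-martingale. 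It then remains to see that this correspondence leaves the two quantities in the dual condition unchanged. For the payoff term, $(c_t-p_t)\cdot\tilde y_t$ is $\F_t$-measurable and lies in $L^1(\tilde P)$, so by the tower property $E^{\tilde P}[(c_t-p_t)\cdot\tilde y_t]=E[\tfrac{d\tilde P}{dP}(c_t-p_t)\cdot\tilde y_t]=E[z_t(c_t-p_t)\cdot\tilde y_t]=E[(c_t-p_t)\cdot y_t]$, and summing over $t$ gives $E^{\tilde P}\sum_{t=0}^T(c_t-p_t)\cdot\tilde y_t=E\sum_{t=0}^T(c_t-p_t)\cdot y_t$. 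For the cost term, positive homogeneity of the support function gives $\sigma_{C_t}(y_t)=z_t\,\sigma_{C_t}(\tilde y_t)$, and since $0\in C_t$ the $\F_t$-measurable random variable $\sigma_{C_t}(\tilde y_t)$ is nonnegative, so the same tower-property computation (now for nonnegative, possibly $+\infty$-valued integrands) yields $E\sum_{t=0}^T\sigma_{C_t}(y_t)=E^{\tilde P}\sum_{t=0}^T\sigma_{C_t}(\tilde y_t)$; in the conical case one may alternatively note that $\{\sigma_{C_t}(\tilde y_t)=0\}=\{\tilde y_t\in C_t^*\}=\{y_t\in C_t^*\}=\{\sigma_{C_t}(y_t)=0\}$ up to a $P$-null set because $z_t>0$ and $C_t^*$ is a cone. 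Substituting these two identities into the characterization from the first step turns the quantification over $\tilde y$ into the one over $y$ that appears in (ii), which finishes the proof.

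I expect the only delicate point to be the bookkeeping around the change of measure: justifying the interchanges of $E^{\tilde P}$ with conditioning when the summands are merely $\tilde P$-integrable or take the value $+\infty$ (using the tower property together with the convention for expectations of possibly non-integrable random variables employed in the proof of Lemma~\ref{lem:support}), and confirming that the admissibility constraint ``$y_t/z_t$ almost surely bounded'' is precisely what is needed for $y\mapsto y/z$ to be a bijection onto the bounded nonnegative $\tilde P$-martingales. Everything else is a direct consequence of Theorem~\ref{thm:hedging} together with standard facts about equivalent changes of probability measure.
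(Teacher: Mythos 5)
Your proof is correct and follows the same route as the paper's: apply Theorem~\ref{thm:hedging} under $\tilde P$, then translate the dual condition back to $P$ via the substitution $y_t=z_t\tilde y_t$, using positive homogeneity of $\sigma_{C_t}$ and the tower property to rewrite the $\tilde P$-expectations, with the constraint that $y/z$ be bounded ensuring the martingale classes match. You are somewhat more explicit than the paper on the bookkeeping (the Bayes-rule bijection between martingale classes, invariance of the hypotheses under equivalent change of measure, the possibly-extended-valued integrands), but the argument is the same.
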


\begin{proof}
Theorem \ref{thm:hedging} applied with $\tilde{P}$ instead of $P$ yields the equivalence of the following \begin{itemize}
\item[(i)]
$p$ is a superhedging premium process for $c$.
\item[(ii)]
$E\sum_{t=0}^T(c_t-p_t)\cdot z_t\tilde{y}_t \le 1$ for every nonnegative bounded $\tilde{P}$-martingale $\tilde{y}=(\tilde{y}_t)_{t=0}^T$ such that $E\sum_{t=0}^Tz_t\sigma_{C_t}(\tilde{y}_t)\le 1$.
%\item[(iii)]
%$E\sum_{t=0}^T(c_t-p_t)\cdot z_t\tilde{y}_t \le E\sum_{t=0}^Tz_t\sigma_{C_t}(\tilde{y}_t)$ for every nonnegative bounded $\tilde{P}$-martingale $\tilde{y}=(\tilde{y}_t)_{t=0}^T$.
\end{itemize}
Note further that $z\sigma_{C_t}(y)=\sigma_{C_t}(zy)$ for all $y\in\reals^d, z\in\reals^+$, and that $\tilde{y}$ is a bounded $\tilde{P}$-martingale iff $z\tilde{y}$ is a bounded $P$-martingale. Thus (ii) holds for all nonnegative bounded $P$-martingales $y$ such that $\frac{1}{z}y$ is almost surely bounded.
\end{proof}

\section{Fundamental theorem of asset pricing}\label{sec:ftap}

Fundamental theorem of asset pricing describes absence of arbitrage by existence of certain pricing functionals. In the classical frictionless model with a cash account these pricing functionals can be identified with equivalent martingale measures. In conical models the robust no-arbitrage condition is equivalent to existence of strictly consistent price systems. This result was proved in \cite{ks1b}, \cite{krs2}, \cite{ipen1}, \cite{sch4}, \cite{krs3} for conical {\em polyhedral} market models, i.e.\ where each $C_t(\omega)$ is a cone spanned by a finite number of vectors. Non-polyhedral conical models are considered in \cite{ras7} and in \cite{rok7} under the assumption of efficient friction, i.e.\ the cones $C_t(\omega)$ are assumed to be pointed. 

In this section we derive a version of the fundamental theorem of asset pricing for general conical models. This is achieved through the following lemma which allows the application of strict separation arguments much as in \cite{sch4}. It is similar to Proposition A.5 in \cite{sch4} but does not rely on polyhedrality of $C$.

\begin{lemma}\label{lem:dom}
If $K$ is a conical market model that has the robust no arbitrage property, then it is dominated by another conical market model $\hat K$ which still has the robust no arbitrage property.
\end{lemma}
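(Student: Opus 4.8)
The plan is to construct $\hat K$ as a ``weighted average'' between $K$ and a model $\tilde K$ that dominates it, so that the new model sits strictly inside $\tilde K$ on the relevant set while still being dominated by $\tilde K$. By the definition of robust no arbitrage, $K$ is dominated by some conical market model $\tilde K$ that has the no arbitrage property, i.e. $K_t\setminus K_t^0\subset\ri\tilde K_t$ for all $t$. First I would observe that, since $K_t\subset\tilde K_t$ and both are closed convex cones, the recession/linearity structure is compatible: in particular $K_t^0\subset\tilde K_t$. The idea is to define, for a fixed scalar $\lambda\in(0,1)$,
\[
\hat K_t(\omega):=\cl\bigl(\lambda K_t(\omega)+(1-\lambda)\tilde K_t(\omega)\bigr),
\]
which is again a closed convex cone containing $\reals^d_-$, and which is $\F_t$-measurable because $K_t$ and $\tilde K_t$ are (measurability of sums and closures of measurable set-valued maps, as used already for $C^\infty_t$ in Section~\ref{sec:mr}). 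Thus $\hat K=(\hat K_t)_{t=0}^T$ is a conical market model.

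The next step is to check the two domination relations $K_t\setminus K_t^0\subset\ri\hat K_t$ and $\hat K_t\setminus\hat K_t^0\subset\ri\tilde K_t$. For the first: if $x\in K_t\setminus K_t^0$ then $x\in\ri\tilde K_t$ and $x\in K_t\subset\hat K_t$; since $\hat K_t$ lies between $K_t$ and $\tilde K_t$, a point that is in one cone and in the relative interior of a larger cone containing it is in the relative interior of any intermediate cone — here I would invoke \cite[Theorem~6.1]{roc70a} (line segments and relative interiors) together with the fact that $\ri\tilde K_t\subset\hat K_t$ forces $\aff\hat K_t=\aff\tilde K_t$, so $x\in\ri\hat K_t$. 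For the second relation, the point is that any $x\in\hat K_t$ is a limit of points $\lambda u^n+(1-\lambda)v^n$ with $u^n\in K_t$, $v^n\in\tilde K_t$; one shows that if $x\notin\hat K_t^0$ then the $\tilde K_t$-component cannot asymptotically be swallowed by the lineality space, so $x$ ends up in $\ri\tilde K_t$, using that $(1-\lambda)>0$ and that $\ri\tilde K_t + \tilde K_t \subset \ri\tilde K_t$ (\cite[Theorem~6.6]{roc70a}) absorbs the $K_t$-part. This gives $K \preceq \hat K \preceq \tilde K$ with both relations being domination.

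Finally I would verify that $\hat K$ has the no arbitrage property. Since $\hat K_t\subset\tilde K_t$, we have $A_T(\hat K)=\sum_t L^0(\hat K_t,\F_t)\subset\sum_t L^0(\tilde K_t,\F_t)=A_T(\tilde K)$, hence $A_T(\hat K)\cap L^0_+\subset A_T(\tilde K)\cap L^0_+=\{0\}$; so $\hat K$ inherits \eqref{na} from $\tilde K$. Combining this with $K\preceq\hat K$ shows that $\hat K$ itself has the robust no arbitrage property (it is dominated by $\tilde K$, which has no arbitrage), and $K$ is dominated by $\hat K$, which is the claim. The main obstacle I anticipate is the second domination relation $\hat K_t\setminus\hat K_t^0\subset\ri\tilde K_t$: one must control the closure operation carefully and rule out the degenerate possibility that a boundary point of $\hat K_t$ arises as a limit in which the $\tilde K_t$-contribution degenerates into the lineality space of $\tilde K_t$; handling this cleanly is where the non-polyhedrality genuinely bites, and it is essentially the place where Schachermayer's polyhedral argument (Proposition~A.5 in \cite{sch4}) has to be replaced by a purely convex-analytic one.
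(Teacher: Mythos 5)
Your construction collapses because you are combining \emph{cones}. For a cone, scalar multiplication by any positive number is absorbed: $\lambda K_t(\omega)=K_t(\omega)$ and $(1-\lambda)\tilde K_t(\omega)=\tilde K_t(\omega)$ for $\lambda\in(0,1)$, so $\lambda K_t+(1-\lambda)\tilde K_t=K_t+\tilde K_t$. And since the domination relation forces $K_t\subset\tilde K_t$ (for closed cones, $K_t^0\subset\tilde K_t$ follows by a limiting argument from $K_t\setminus K_t^0\subset\ri\tilde K_t$), we get $K_t+\tilde K_t=\tilde K_t$, i.e.\ $\hat K_t=\tilde K_t$. Your required second domination $\hat K_t\setminus\hat K_t^0\subset\ri\tilde K_t$ then reads $\tilde K_t\setminus\tilde K_t^0\subset\ri\tilde K_t$, which is false for generic cones (take $\tilde K_t=\reals^2_-$: the point $(-1,0)$ is in $\tilde K_t\setminus\tilde K_t^0$ but not in $\ri\tilde K_t$). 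Independently of this, the argument you give for the first domination is also incorrect: a point of an intermediate cone that lies in the relative interior of a bigger cone with the same affine hull need not lie in the relative interior of the intermediate cone (take $K_t=\reals_+\times\{0\}$, $\hat K_t=\reals_+^2$, $\tilde K_t=\reals^2$, $x=(1,0)$). So the proposal does not prove the lemma.

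The paper avoids the cone-absorption problem by working entirely in the dual and reducing to \emph{bounded} convex sets, where a genuine ``fattening'' construction is available. Rewriting the domination as $\tilde K_t^*\setminus\{0\}\subset\ri K_t^*$ and picking an $\F_t$-measurable $x_t\in\inte\tilde K_t$, one slices the polars by the hyperplane $\{v : x_t\cdot v=-1\}$ to get bases $\tilde G_t\subset\ri G_t$, with $\tilde G_t$ compact (because $\tilde K_t$ has nonempty interior). Compactness yields a measurable $\varepsilon_t>0$ with $[\tilde G_t+\varepsilon_t\uball]\cap\aff G_t\subset G_t$, and setting $\hat G_t:=[\tilde G_t+\tfrac12\varepsilon_t\uball]\cap\aff G_t$ gives $\tilde G_t\subset\ri\hat G_t\subset\ri G_t$; taking polars of the cones generated by the $\hat G_t$ produces the desired intermediate model. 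The essential idea your proposal is missing is exactly this passage to compact bases of the polar cones, which turns the problem into wedging a convex body strictly between $\tilde G_t$ and $G_t$ — an operation that makes sense for bounded sets but has no analogue at the level of the unbounded cones themselves.
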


\begin{proof}
If $K$ has the robust no arbitrage property, there is an arbitrage free model $\tilde K$ such that $K_t\setminus K_t^0\subset\inte\tilde K_t$, or equivalently, $\tilde K_t^*\setminus\{0\}\subset\ri K_t^*$. Let $x_t$ be an $\F_t$-measurable vector such that $x_t\in\inte\tilde K_t$. We can write $K^*_t(\omega)$ and $\tilde K^*_t(\omega)$ as 
\[
K_t^*(\omega) = \bigcup_{\alpha\ge 0}\alpha G_t(\omega)\quad\text{and}\quad\tilde K_t^*(\omega) = \bigcup_{\alpha\ge 0}\alpha\tilde G_t(\omega),
\]
where $G_t(\omega)=\{v\in\reals^d\,|\, v\in K_t^*(\omega),\ x_t(\omega)\cdot v=-1\}$ and $\tilde G_t(\omega)=\{v\in\reals^d\,|\, v\in\tilde K^*_t(\omega),\ x_t(\omega)\cdot v=-1\}$. Since $\tilde K_t^*\setminus\{0\}\subset\ri K_t^*$ we have $\tilde G_t(\omega)\subset\ri G_t(\omega)$. It suffices to show that there is an $\F_t$-measurable closed convex set $\hat G_t(\omega)$ such that $\tilde G_t(\omega)\subset\ri\hat G_t(\omega)\subset\ri G_t(\omega)$. Indeed, the polar of the cone 
\[
\hat K_t^*(\omega) = \bigcup_{\alpha\ge 0}\alpha\hat G_t(\omega)
\]
will then dominate $K$ and be dominated by $\tilde K$.

Since $\tilde K_t(\omega)$ has nonempty interior, the set $\tilde G_t(\omega)$ is compact. This implies that there exists an $\varepsilon_t(\omega)>0$ such that $[\tilde G_t(\omega)+\varepsilon_t(\omega)\uball]\cap\aff G_t(\omega)\subset G_t(\omega)$ so the set $\hat G_t(\omega):=[\tilde G_t(\omega)+\frac{1}{2}\varepsilon_t(\omega)\uball]\cap\aff G_t(\omega)$ has the desired properties. 
\end{proof}

Equipped with Theorem~\ref{thm:cl} and Lemma~\ref{lem:dom} it is easy to extend the proof of \cite[Theorem~1.7]{sch4} to get the following.

\begin{theorem}\label{thm:ftap1}
A conical market model $K$ has the robust no arbitrage property if and only if there exists a strictly consistent price system $y$ for $K$. Moreover, the price system $y$ can be chosen bounded.
\end{theorem}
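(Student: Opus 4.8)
The plan is to prove the two directions separately, the easy one first. For the ``if'' direction, suppose $y$ is a strictly consistent price system for $K$, so $y$ is a martingale with $y_t\in\ri K_t^*$ a.s.\ for all $t$ and $y$ has strictly positive components. To show robust no arbitrage we must exhibit an arbitrage-free conical model $\tilde K$ dominating $K$. The natural candidate is built on the dual side: let $\tilde K_t^*(\omega)$ be (the measurable selection of) a cone slightly larger than $\reals_+ y_t(\omega)$ but still inside $\ri K_t^*(\omega)$; more simply, one takes $\tilde K_t$ to be the model whose polar cone is generated by $y_t$ together with a small neighborhood of $y_t$ inside $K_t^*$, or even just $\tilde K_t^*=\reals_+y_t$ after a preliminary widening. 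Domination $K_t\setminus K_t^0\subset\inte\tilde K_t$ translates to $\tilde K_t^*\setminus\{0\}\subset\ri K_t^*$, which holds by construction since $y_t\in\ri K_t^*$. Absence of arbitrage for $\tilde K$ follows because $y$ strictly prices $\tilde K$: if $x=(x_t)$ is self-financing in $\tilde K$ with $x_T\ge 0$, then $E(x_T\cdot y_T)=\sum_t E(\Delta x_t\cdot y_t)\le 0$ by the martingale property and $\Delta x_t\in\tilde K_t$, $y_t\in\tilde K_t^*$; since $y_T>0$ and $x_T\ge0$ this forces $x_T=0$. (The measurable selection $y_t\in\ri K_t^*$ with strictly positive components is exactly what makes this work, and it is available by Definition~\ref{cps}.)

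For the ``only if'' direction I would follow Schachermayer's argument from \cite[Theorem~1.7]{sch4}, now unobstructed by polyhedrality thanks to Theorem~\ref{thm:cl} and Lemma~\ref{lem:dom}. First apply Lemma~\ref{lem:dom} to replace $K$ by a dominating model $\hat K$ that still has robust no arbitrage; the point of this step is that $\hat K$ dominates $K$ \emph{and} is dominated by some arbitrage-free $\tilde K$, giving room on both sides. By Theorem~\ref{thm:cl} applied to $\tilde K$ (which, being conical and arbitrage-free with robust no arbitrage, has $A_T(\tilde K)$ closed in probability), and since the no-arbitrage property gives $A_T(\tilde K)\cap L^0_+=\{0\}$, one can separate: working in $L^1$ under an equivalent measure with bounded density (reduce to the $L^\infty$/$L^1$ duality by replacing $P$ with $d\tilde P/dP\propto(1+\sum_t\|\cdot\|)^{-1}$ as in the discussion before Corollary~\ref{cor:hedging}), the Kreps--Yan / Hahn--Banach separation theorem produces a nonzero $y\in\A^\infty$ with $E\sum_t w_t\cdot y_t\le 0$ for all self-financing $w$ in $\tilde K$ and $E(x_T\cdot y_T)>0$ for some $x_T\in L^\infty_+\setminus\{0\}$. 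Taking increments shows $y$ is a martingale and $y_t\in\tilde K_t^*$ a.s. Exhausting $\Omega$ by a countable family of such separating functionals and taking a strictly positive convex combination, one gets a single bounded martingale $y$ with $y_t\in\tilde K_t^*$ a.s.\ and $y_T>0$ a.s.; strict positivity of all $y_t$ then propagates backward by the martingale property and $\reals_-^d\subset\tilde K_t$.

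It remains to upgrade $y_t\in\tilde K_t^*$ to $y_t\in\ri K_t^*$, i.e.\ to strict consistency for the original $K$. This is where the domination chain $K\prec\hat K\prec\tilde K$ pays off: since $\tilde K$ dominates $\hat K$ we have $\tilde K_t^*\setminus\{0\}\subset\ri\hat K_t^*$, and since $\hat K$ dominates $K$ we have $\hat K_t^*\setminus\{0\}\subset\ri K_t^*$; hence $y_t\in\tilde K_t^*$, $y_t\ne 0$ gives $y_t\in\ri K_t^*$ a.s.\ for all $t$. Combined with $y>0$ and the martingale property already established, $y$ is a strictly consistent price system for $K$, and it is bounded by construction. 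The main obstacle is the separation/exhaustion step in the ``only if'' direction: one must verify that the closedness supplied by Theorem~\ref{thm:cl} holds under the equivalent measure with bounded density (closedness in probability is invariant under such a change of measure, so this is fine), and that the countable exhaustion of $\{y_T>0\}$ over all separating functionals can be carried out measurably and aggregated into one bounded martingale — a standard but slightly delicate argument, handled exactly as in \cite{sch4} since no polyhedrality is used there either once Theorem~\ref{thm:cl} is available.
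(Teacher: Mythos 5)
Your overall architecture matches the paper's (Lemma~\ref{lem:dom} plus Kreps--Yan in $\A^1$/$\A^\infty$ for one direction, an explicit dominating halfspace model $\tilde K_t=\{x:y_t\cdot x\le0\}$ for the other), but there are two genuine gaps.

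First, in the ``only if'' direction you apply Theorem~\ref{thm:cl} to $\tilde K$, the \emph{outermost} model in the chain $K\prec\hat K\prec\tilde K$, and assert that $\tilde K$ has the robust no-arbitrage property. It does not: Lemma~\ref{lem:dom} only guarantees that $\tilde K$ is arbitrage-free, and plain no-arbitrage does not imply closedness of $A_T$ (this is Schachermayer's counterexample and the whole reason for the robust condition). The closedness argument must be run on $\hat K$, which is the model Lemma~\ref{lem:dom} certifies to still have robust no arbitrage; separating from the closed cone $A^1(\hat K)$ then yields $y_t\in\hat K_t^*\setminus\{0\}\subset\ri K_t^*$, which is exactly the paper's step. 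Your ``double-sided domination chain'' is not what is needed --- what matters is a \emph{single} domination step paired with closedness for the \emph{same} intermediate model. (Also, the change of measure to an equivalent $\tilde P$ with bounded density is unnecessary here: the paper separates directly in the $\A^1$/$\A^\infty$ pairing via the Kreps--Yan theorem of \cite{jns5}; there is no claim $c_T$ floating around in this theorem whose integrability needs fixing.)

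Second, in the ``if'' direction you write $E(x_T\cdot y_T)=\sum_t E(\Delta x_t\cdot y_t)\le 0$, but neither $\Delta x_t\cdot y_t$ nor $x_T\cdot y_T$ is integrable a priori: $y$ is bounded but $x$ is merely adapted in $L^0$. The paper handles this by Abel summation, recognizing $M_s=\sum_{t\le s}x_{t-1}\cdot(y_t-y_{t-1})$ as a discrete stochastic integral against the martingale $y$, invoking Jacod--Shiryaev (\cite{js98}) to conclude it is a local martingale, and then using the fact that a discrete-time local martingale with $M_T\ge0$ a.s.\ is a true martingale, so $EM_T=M_0=0$. Without this step the conclusion $x_T=0$ (or $c_t=0$ a.s.) does not follow from what you wrote; this is a nontrivial ingredient, not a routine martingale manipulation, and you omit it entirely. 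The construction $\tilde K_t^*=\reals_+y_t$ that you hedge about is fine as stated in the paper; no ``preliminary widening'' is needed, since domination translates to $\tilde K_t^*\setminus\{0\}\subset\ri K_t^*$, which is precisely $y_t\in\ri K_t^*$.
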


\begin{proof}
Lemma~\ref{lem:dom} implies the existence of another conical market model $\hat K$ such that $\hat K_t^*\setminus\{0\}\subset\ri (K_t)^*$. By Theorem~\ref{thm:cl} and Lemma~\ref{lem:ata}, the set $A(\hat K)$ is closed with respect to convergence in measure. Thus $A^1(\hat K)=A(\hat K)\cap\A^1$ is a convex cone in $\A^1$ which is closed in the norm topology. Moreover, the no-arbitrage property implies $A^1(\hat K)\cap\A^1_+=\{0\}$, where $\A^1_+$ denotes the $\reals^d_+$-valued processes in $\A^1$.

The Banach space $\A^1$ can be identified with $L^1(\widehat{\Omega}, \widehat{\F}, \mu)$, where $(\widehat{\Omega}, \widehat{\F})$ denotes the product of the spaces $(\Omega, \F_t), t=0\pk T$ and $\mu$ is a sigma-finite measure defined by the product  $P\times\nu$ for the counting measure $\nu$ on $\{0\pk T\}$. Thus by \cite[Lemma 2, Theorem~1]{jns5} the Kreps-Yan theorem holds true on $\A^1$, i.e.\ there exists a $y\in\A^\infty$ such that 
\begin{equation}\label{first}
E\sum_{t=0}^Ty_t\cdot c_t\le 0\quad \forall c\in A^1(\hat K)
\end{equation}
and 
\begin{equation}\label{second}
E\sum_{t=0}^Ty_t\cdot c_t > 0\quad \forall c\in\A^1_+\setminus\{0\}.
\end{equation}
Condition \eqref{first} can be written as $\sigma_{A^1(\hat K)}(y)\le 0$ which, by Lemma \ref{lem:support}, means that $y$ is a nonnegative martingale with $\sigma_{\hat K_t(\omega)}(y_t(\omega))\le 0$ almost surely for every $t$. Since $\hat K_t(\omega)$ is a cone, we have $y_t(\omega)\in \hat K_t(\omega)^*$. Condition \eqref{second} means that $y$ is componentwise strictly positive so $y_t\in\hat K_t^*\setminus\{0\}\subset\ri K_t^*$ and $y$ is a strictly consistent price system for $K$.

To prove the converse let $y$ be a strictly consistent price system for $K$ and define a conical model
\[
\tilde K_t:=\lk x\in\reals^d\mk y_t\cdot x\le 0\rk,\quad t=0\pk T.
\]
Since $y\in\ri K$ the model $\tilde K$ dominates $K$. It suffices to show that $\tilde K$ satisfies the no arbitrage condition. To this end let $c\in A(\tilde K)\cap\A_+$ and let $x\in\A$ be a self financing strategy that hedges $c$. Then we have
\[
y_t\cdot (x_t-x_{t-1}+c_t)\le0\quad\mbox{for all}\quad t=0\pk T
\] and hence
\begin{equation*}
0\le\sum_{t=0}^Ty_t\cdot c_t\le-\sum_{t=0}^Ty_t\cdot (x_t-x_{t-1})=\-\sum_{t=0}^{T}x_{t-1}\cdot (y_{t}-y_{t-1}).
\end{equation*}
Since $y$ is a martingale and $x$ is adapted, the process
\[
M_s:=\sum_{t=0}^{s}x_{t-1}\cdot (y_{t}-y_{t-1}),\quad s=0\pk T
\]
is a local martingale by Theorem 1 in Jacod and Shrirjaev \cite{js98}. Moreover, $M$ is a martingale by \cite[Theorem 2]{js98} since $M_T\ge0$ almost surely. Since $M_0=0$ we obtain
\[
E\left[\sum_{t=0}^Ty_t\cdot c_t\right]=0
\]
and hence $\sum_{t=0}^Ty_t\cdot c_t=0$ almost surely. Since $y$ has strictly positive components and $c\in\A_+$ this implies $c_t=0\;P$-a.s. for all $t$. Thus the no arbitrage condition holds for $\tilde K$.
\end{proof}

Using Theorem~\ref{thm:ftap1} we can restate Corollary~\ref{cor:hedgconical} in terms of strictly consistent price systems. 

\begin{corollary}
Assume that $C$ is a conical market model with the robust no arbitrage property. Assume further that $\F_0$ is a trivial $\sigma$-algebra. Let $c_T\in L^1(P)$ and $p_0\in\reals$. Then the following are equivalent.
\begin{itemize}
\item[(i)]
$p=(p_0,0,\ldots,0)$ is a superhedging premium for $c=(0,\ldots,0,c_T)$.
\item[(ii)]
$E(c_T\cdot y_T)\le p_0\cdot y_0$ for every bounded strictly consistent price system $y=(y_t)_{t=0}^T$.
\end{itemize}
\end{corollary}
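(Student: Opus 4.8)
The plan is to derive this from Corollary~\ref{cor:hedgconical} together with Theorem~\ref{thm:ftap1}. The implication (i) $\Rightarrow$ (ii) is immediate: by Definition~\ref{cps} every strictly consistent price system is in particular a consistent price system (since $\ri C_t^*\subset C_t^*$), so if (i) holds, Corollary~\ref{cor:hedgconical} already gives $E(c_T\cdot y_T)\le p_0\cdot y_0$ for all bounded consistent price systems $y$, hence in particular for all bounded strictly consistent ones.

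For the converse I would upgrade the inequality in (ii), assumed only for strictly consistent price systems, to all bounded consistent price systems, after which Corollary~\ref{cor:hedgconical} yields (i). Since $C$ has the robust no arbitrage property, Theorem~\ref{thm:ftap1} provides a bounded strictly consistent price system $\bar y$. Given an arbitrary bounded consistent price system $y$, I would set $y^n:=(1-1/n)\,y+(1/n)\,\bar y$ for $n\ge 1$ and check that each $y^n$ is a bounded strictly consistent price system: it is a bounded martingale; it has strictly positive components because $\bar y$ does and because $y_t\in C_t^*\subset\reals^d_+$; and $y^n_t\in\ri C_t^*$ by the line segment principle of convex analysis (\cite[Theorem~6.1]{roc70a}), applied to $\bar y_t\in\ri C_t^*$ and $y_t\in C_t^*=\cl C_t^*$. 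Then (ii) gives $E(c_T\cdot y^n_T)\le p_0\cdot y^n_0$ for every $n$.

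Finally I would let $n\to\infty$. The vectors $y^n_T$ are uniformly bounded (as $y$ and $\bar y$ are bounded) and converge pointwise to $y_T$, so dominated convergence, using $c_T\in L^1(P)$, gives $E(c_T\cdot y^n_T)\to E(c_T\cdot y_T)$; and since $\F_0$ is trivial, $y^n_0\to y_0$ is convergence of deterministic vectors, so $p_0\cdot y^n_0\to p_0\cdot y_0$. Hence $E(c_T\cdot y_T)\le p_0\cdot y_0$ for the arbitrary bounded consistent price system $y$, which is exactly condition (ii) of Corollary~\ref{cor:hedgconical}, and that corollary then delivers (i). The only step needing any real argument is that the perturbation stays in the relative interior of the polar cones --- the line segment principle invoked above --- everything else being bookkeeping plus one dominated convergence.
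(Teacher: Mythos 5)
Your proposal is correct and takes essentially the same approach as the paper: the paper also perturbs an arbitrary bounded consistent price system $y$ by convex combination with a bounded strictly consistent price system $y^*$ (Theorem~\ref{thm:ftap1}), forming $y^\varepsilon=\varepsilon y^*+(1-\varepsilon)y$, and then observes that for small $\varepsilon$ the inequality for $y^\varepsilon$ implies the one for $y$, which is precisely your $n\to\infty$ limit stated in contrapositive form. The extra details you supply (line-segment principle for $\ri C_t^*$, dominated convergence with the uniform bound and $c_T\in L^1$, triviality of $\F_0$ for the $t=0$ term) are exactly the justifications the paper leaves implicit.
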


\begin{proof}
By Corollary~\ref{cor:hedgconical}, it suffices to show that (ii) implies (i). Theorem \ref{thm:ftap1} says that there exists a strictly consistent price system $y^*$ for $C$. Then for $\varepsilon\in(0,1]$ and for any consistent price system $y$ the process $y^\varepsilon:= \varepsilon y^*+(1-\varepsilon)y$ defines a strictly consistent price system and for $\varepsilon$ small enough  $E(c_T\cdot y^\varepsilon_T)> p_0\cdot y^\varepsilon _0$ if $E(c_T\cdot y_T)> p_0\cdot y_0$.
\end{proof}

For general convex models, Theorem~\ref{thm:ftap1} can be stated in the following form.

\begin{corollary}
A convex market model $C$ satisfies the robust no scalable arbitrage property if and only if there exists a strictly positive martingale $y$ such that $y_t\in\ri\dom\sigma_{C_t}$ almost surely.
\end{corollary}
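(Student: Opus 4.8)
The plan is to read off the statement from Theorem~\ref{thm:ftap1} applied to the conical model $C^\infty$, after rewriting the membership $y_t\in\ri\dom\sigma_{C_t}$ as a condition on the polar cones $(C^\infty_t)^*$.

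By definition, $C$ has the robust no scalable arbitrage property precisely when the conical model $C^\infty$ has the robust no arbitrage property, and by Theorem~\ref{thm:ftap1} the latter is equivalent to the existence of a (bounded) strictly consistent price system $y=(y_t)_{t=0}^T$ for $C^\infty$, i.e.\ a strictly positive martingale with $y_t\in\ri(C^\infty_t)^*$ almost surely. So it remains to identify $\ri(C^\infty_t)^*$ with $\ri\dom\sigma_{C_t}$. For this I would use the classical duality between recession cones and barrier cones: for any nonempty closed convex set $D\subset\reals^d$ the effective domain $\dom\sigma_D$ (the barrier cone of $D$) is a convex cone whose closure is the polar $(D^\infty)^*$ of the recession cone; see \cite{roc70a}. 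Since for convex sets the relative interior is unchanged under passing to the closure (\cite[Theorem~6.3]{roc70a}), this gives $\ri\dom\sigma_D=\ri(D^\infty)^*$, and applying it with $D=C_t(\omega)$ yields $\ri\dom\sigma_{C_t(\omega)}=\ri(C^\infty_t(\omega))^*$ for all $t$ and $\omega$.

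Putting the pieces together, a strictly positive martingale $y$ satisfies $y_t\in\ri\dom\sigma_{C_t}$ almost surely for every $t$ if and only if it is a strictly consistent price system for $C^\infty$ --- note that strict positivity of $y$ automatically supplies the non-triviality demanded in Definition~\ref{cps} --- so such a $y$ exists if and only if $C^\infty$ has the robust no arbitrage property, i.e.\ if and only if $C$ has the robust no scalable arbitrage property. The argument is a straightforward translation and I do not anticipate any real obstacle; the only points needing a little care are the convex-analytic identity $\cl\dom\sigma_D=(D^\infty)^*$ (and the ensuing statement about relative interiors) and the bookkeeping that matches ``strictly positive martingale'' with the definition of a strictly consistent price system.
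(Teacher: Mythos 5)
Your argument is correct and follows essentially the same route as the paper: apply Theorem~\ref{thm:ftap1} to $C^\infty$ and then identify $\ri(C^\infty_t)^*$ with $\ri\dom\sigma_{C_t}$ via the recession-cone/barrier-cone polarity (\cite[Corollary~14.2.1]{roc70a}, the bipolar theorem) together with $\ri\cl S=\ri S$ for convex $S$ (\cite[Theorem~6.3]{roc70a}). The paper's proof is exactly this, stated more tersely.
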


\begin{proof}
Applying Theorem~\ref{thm:ftap1} to $C^\infty$, we get that $C^\infty$ has the robust no-arbitrage property iff there exists a strictly consistent price system for $C^\infty$. The claim follows by noting that $\ri(C_t^\infty)^*=\ri\dom\sigma_{C_t}$, by \cite[Theorem~6.3]{roc70a}, the bipolar theorem and \cite[Corollary~14.2.1]{roc70a}.
\end{proof}

As noted in Remark~\ref{rem:arb}, arbitrage opportunities may very well exist under the conditions of Theorem~\ref{thm:cl}. They do not interfere with the hedging description in Theorem \ref{thm:hedging}. In Section 5 of \cite{cr7} it was also noted that arbitrage opportunities and optimal portfolios could coexist in an illiquid market. If one wants to exclude all arbitrage opportunities, one can formulate more restrictive no-arbitrage conditions as e.g.\ in \cite{at7} or \cite{pen8}.

\section{Proof of Theorem~\ref{thm:cl}}\label{sec:proof}

The proof of Theorem~\ref{thm:cl} requires some preparation. First we will use projection techniques similar to those in \cite{sch4} to extract self-financing portfolio processes ending with $0$ in the model $C^\infty$. For a convex market model $C$ we consider the set
\[ 
\N(C):=\lk x\in\A \mk \Delta x_t\in C_t\:\, P\text{-a.s.}\:\forall\,t=0\pk T,\;\, x_T=0\rk.
\]
The next lemma is a version of Lemma 5 in \cite{krs3}, see also Lemma 2.6 in \cite{sch4}. 

\begin{lemma}\label{lem:zero}
If $C$ is a convex market model that has the robust no scalable arbitrage property then $\N(C^\infty)=\N(C^0)$.
\end{lemma}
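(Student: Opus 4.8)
The plan is to prove the two inclusions $\N(C^0)\subset\N(C^\infty)$ and $\N(C^\infty)\subset\N(C^0)$ separately. The first is immediate: since $C^0_t(\omega)$ is a linear subspace contained in $C_t(\omega)$, it is in particular contained in the recession cone $C^\infty_t(\omega)=\bigcap_{\alpha>0}\alpha C_t(\omega)$ (any linear subspace equals each of its positive multiples), so every $x\in\N(C^0)$ automatically lies in $\N(C^\infty)$.

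The substantive direction is $\N(C^\infty)\subset\N(C^0)$. Here I would follow the projection argument of \cite[Lemma~5]{krs3} / \cite[Lemma~2.6]{sch4}. Fix $x\in\N(C^\infty)$, so $\Delta x_t\in C^\infty_t$ a.s.\ for all $t$ and $x_T=0$. The goal is to show each increment actually lies in the linear space $C^0_t$, equivalently that $\Delta x_t\in C^\infty_t\cap(-C^\infty_t)$ a.s. The idea is to use the robust no scalable arbitrage hypothesis: by definition $C^\infty$ is dominated by a conical market model $\tilde C$ with the no-arbitrage property, i.e. $C^\infty_t\setminus C^0_t\subset\ri\tilde C_t$ for all $t$. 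Suppose, for contradiction, that on some set $B\in\F_{t_0}$ of positive probability we have $\Delta x_{t_0}\in C^\infty_{t_0}\setminus C^0_{t_0}$, hence $\Delta x_{t_0}\in\ri\tilde C_{t_0}$ on $B$. Proceeding backwards in time as in \cite{krs3}, I would project the process onto the orthogonal complements of the spaces $C^0_t$ and rescale so as to produce from $x$ a nonzero self-financing portfolio process in the conical model $\tilde C$ whose terminal value is a nonzero element of $L^0_+$ — contradicting $A_T(\tilde C)\cap L^0_+=\{0\}$. Concretely: since $\Delta x_{t_0}\in\ri\tilde C_{t_0}$ on $B$, one can perturb $\Delta x_{t_0}$ slightly (staying in $\tilde C_{t_0}$) to gain a strictly positive component while compensating with the remaining increments $\Delta x_t$ for $t>t_0$, which lie in $C^\infty_t\subset\tilde C_t$; the requirement $x_T=0$ is what lets the perturbation be "absorbed" so the construction closes up, and conicality of $\tilde C$ lets us rescale freely.

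The technical heart — and the step I expect to be the main obstacle — is the measurable backward induction that simultaneously (a) replaces each $\Delta x_t$ by its component orthogonal to $C^0_t$, which by \cite[Exercise~14.21]{rw98}-type measurable selection remains an adapted process and stays in $C^\infty_t$ because $C^\infty_t=C^0_t\oplus(C^\infty_t\cap (C^0_t)^\perp)$, and (b) handles the fact that the "gain" produced at time $t_0$ must be carried through times $t_0+1,\dots,T$ without leaving the cones $\tilde C_t$, using that $\tilde C_t\supset C^\infty_t\supset\reals^d_-$ so free disposal is available. One must be careful that the perturbation size can be chosen $\F_{t_0}$-measurably and bounded below on $B$; compactness of a normalized slice of $\ri\tilde C_{t_0}$ (as in the proof of Lemma~\ref{lem:dom} above) gives the uniform $\varepsilon$. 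Once a nonzero nonnegative element of $A_T(\tilde C)$ is exhibited, the no-arbitrage property of $\tilde C$ yields the contradiction, forcing $\Delta x_t\in C^0_t$ a.s.\ for every $t$, i.e.\ $x\in\N(C^0)$.

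I would organize the write-up as: first dispose of the easy inclusion in one line; then set up the orthogonal decomposition $C^\infty_t=C^0_t\oplus (C^\infty_t\cap(C^0_t)^\perp)$ and reduce to showing the orthogonal part of each $\Delta x_t$ vanishes; then run the backward induction, at each stage invoking domination $C^\infty_t\setminus C^0_t\subset\ri\tilde C_t$ and the no-arbitrage property of $\tilde C$ to kill the orthogonal component; citing \cite{krs3,sch4} for the parts of the projection construction that are identical to theirs and spelling out only where the general (non-polyhedral) setting requires the measurable-selection/compactness arguments already used in Lemma~\ref{lem:dom}.
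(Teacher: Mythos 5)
Your core idea---invoke the domination $C^\infty_t\setminus C^0_t\subset\ri\tilde C_t$ together with the no-arbitrage property of $\tilde C$ and derive a contradiction by a nonnegative perturbation at the offending time $t_0$---is exactly the paper's idea, and the easy inclusion $\N(C^0)\subset\N(C^\infty)$ is disposed of as you say. But you then wrap this in a backward-induction/orthogonal-projection apparatus that the paper does not use for this lemma, and the parts you label the ``technical heart'' are in fact absent from the paper's proof. The actual argument is one step: if $x\in\N(C^\infty)$ and $\Delta x_{t_0}\in C^\infty_{t_0}\setminus C^0_{t_0}$ on a set of positive probability, then $\Delta x_{t_0}\in\inte\tilde C_{t_0}$ there; fix any $e\in\reals^d_+\setminus\{0\}$ and set
\[
\varepsilon_{t_0}(\omega)=\sup\{\varepsilon\ge 0\mid \Delta x_{t_0}(\omega)+\varepsilon e\in\tilde C_{t_0}(\omega)\},
\]
an $\F_{t_0}$-measurable nonnegative random variable that is not a.s.\ zero. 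The claim process with $(\varepsilon_{t_0}\wedge 1)e$ at time $t_0$ and zero elsewhere is hedged in $\tilde C$ by the very same portfolio $x$ (the increments at $t\ne t_0$ are untouched and lie in $C^\infty_t\subset\tilde C_t$, and at $t_0$ one uses the definition of $\varepsilon_{t_0}$), so one obtains a nonzero element of $A(\tilde C)\cap\A_+$, contradicting $A(\tilde C)\cap\A_+=\{0\}$ from Lemma~\ref{lem:ata}.

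Three of your ingredients are superfluous here. First, the decomposition $C^\infty_t=C^0_t\oplus(C^\infty_t\cap(C^0_t)^\perp)$ and the replacement of $\Delta x_t$ by its orthogonal part belong to Lemma~\ref{lem:ort} and the proof of Theorem~\ref{thm:cl}, where hedging portfolios are reduced to lie in $N_t^\perp$; they play no role in proving $\N(C^\infty)=\N(C^0)$. Second, the backward induction to ``carry the gain through $t_0+1,\dots,T$'' is unnecessary: the paper treats the perturbation as a \emph{claim} paid at time $t_0$ rather than as a modification of the portfolio, so $x$ itself is never altered and $x_T=0$ is automatic. Third, the uniform $\varepsilon$ via a compact slice is not needed and is a genuine misdirection---$\varepsilon_{t_0}(\omega)$ is allowed to be $\omega$-dependent and arbitrarily small; all that matters is $\F_{t_0}$-measurability and positivity on a nontrivial set (plus capping by $1$ to land in $\A$). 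In short, the idea is right, but you have reconstructed machinery that the paper deliberately places elsewhere and does not invoke here.
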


\begin{proof}
We have $\N(C^\infty)\supset\N(C^0)$ simply because $C^\infty_t\supset C^0_t$ almost surely for every $t$, so it suffices to prove the reverse. Let $x\in\N(C^\infty)$ and assume that $\Delta x_t\in C^\infty_t\setminus C_t^0$ on some set with positive probability for some $t=0,\ldots,T$. This contradicts the robust no scalable arbitrage property. Indeed, if $\tilde C$ is a market model such that $C^\infty_t\setminus C_t^0\subset\inte\tilde C_t$, we have $\Delta x_t\in\inte\tilde C_t$ on a nontrivial set, and then for any $e\in\reals^d_+\setminus\{0\}$
\[
\varepsilon_t(\omega)=\sup\{\varepsilon\in\reals\,|\, \Delta x_t(\omega) + \varepsilon e\in\tilde C_t(\omega)\}
\]
defines an $\F_t$-measurable $\reals^d_+$-valued random variable (see e.g.\ \cite[Theorem~14.37]{rw98}) which does not vanish almost surely. By Lemma~\ref{lem:ata}, this would mean that $\tilde C$ violates the no-arbitrage condition.
\end{proof}

For each $t\in\{0\pk T\}$ we denote by $\N_t$ the set of all $\F_t$-measurable random vectors that may be extended to a portfolio in $\N(C^0)$, i.e.
\[
\N_t:=\lk x_t\in L^0(\reals^d,\F_t)\mk \exists \,x_{t+1}\pk x_{T}\,\text{s.t.}\,(0\pk0,x_t\pk x_T)\in\N(C^0)\rk.
\]

\begin{lemma}\label{lem:proj}
Let $C$ be a convex market model. Then for each $t\in\{0\pk T\}$ there is an $\F_t$-measurable map $N_t:\Omega\tos\reals^d$ whose values are linear subspaces of $\reals^d$ and $\N_t=L^0(N_t,\F_t)$. 
\end{lemma}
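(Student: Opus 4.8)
The plan is to prove the statement by downward induction on $t$, exploiting that the lineality maps $C^0_t$ are subspace-valued so that the self-financing constraint for $C^0$ can be propagated one date at a time.

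First I would isolate the one piece that does not recurse. Note that $C^0_t=C^\infty_t\cap(-C^\infty_t)$ (the largest subspace contained in a closed convex set containing $0$ is the lineality space of its recession cone), so $C^0$ is $\F_t$-measurable since $C^\infty$ is; also $L^0(A,\F)\cap L^0(B,\F)=L^0(A\cap B,\F)$ for measurable subspace-valued $A,B$. Let $\N'_t$ be the set of $\F_t$-measurable $x_t$ that can be completed to an adapted $x_{t+1}\pk x_T$ with $x_{t+1}-x_t\in C^0_{t+1}$, $\Delta x_u\in C^0_u$ for $u\ge t+2$, and $x_T=0$. For the process $(0\pk0,x_t\pk x_T)$ the constraints defining $\N(C^0)$ are $x_t=\Delta x_t\in C^0_t$ together with the conditions defining $x_t\in\N'_t$, so $\N_t=L^0(C^0_t,\F_t)\cap\N'_t$. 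Hence it suffices to prove $\N'_t=L^0(N'_t,\F_t)$ for an $\F_t$-measurable subspace-valued $N'_t$; then $N_t:=C^0_t\cap N'_t$ does the job.

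Next I would set up the recursion $\N'_T=\{0\}$ and, for $t<T$,
\[
\N'_t=L^0(\reals^d,\F_t)\cap(\N'_{t+1}+L^0(C^0_{t+1},\F_{t+1})).
\]
Indeed $x_t\in\N'_t$ iff there is an $x_{t+1}\in\N'_{t+1}$ with $x_{t+1}-x_t\in C^0_{t+1}$ a.s.\ (membership of the tail $x_{t+1}\pk x_T$ is exactly $x_{t+1}\in\N'_{t+1}$), and since $C^0_{t+1}$ is a linear subspace this says precisely that the $\F_t$-measurable vector $x_t$ lies in $\N'_{t+1}+L^0(C^0_{t+1},\F_{t+1})$. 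Then I would induct: $N'_T:=\{0\}$; and given $\N'_{t+1}=L^0(N'_{t+1},\F_{t+1})$, set $D_{t+1}:=C^0_{t+1}+N'_{t+1}$, an $\F_{t+1}$-measurable subspace-valued map. A routine measurable selection argument gives $L^0(C^0_{t+1},\F_{t+1})+L^0(N'_{t+1},\F_{t+1})=L^0(D_{t+1},\F_{t+1})$, so by the recursion $\N'_t$ is exactly the set of $\F_t$-measurable selectors of $D_{t+1}$. To represent this set as $L^0(N'_t,\F_t)$ I would use a projection device: let $P$ be the $\F_{t+1}$-measurable $\reals^{d\times d}$-valued field of orthogonal projections onto $D_{t+1}$, and put $N'_t(\omega):=\ker(I-E[P\mid\F_t](\omega))$, the eigenspace of the symmetric matrix $E[P\mid\F_t](\omega)$ (which satisfies $0\preceq E[P\mid\F_t]\preceq I$) for the eigenvalue $1$. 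This $N'_t$ is an $\F_t$-measurable subspace-valued map, and for $\F_t$-measurable $v$ one has $v\in D_{t+1}$ a.s.\ if and only if $v^\top(I-P)v=\|(I-P)v\|^2=0$ a.s., if and only if $v^\top(I-E[P\mid\F_t])v=E[v^\top(I-P)v\mid\F_t]=0$ a.s., if and only if $v\in N'_t$ a.s.\ This gives $\N'_t=L^0(N'_t,\F_t)$ and completes the induction.

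The step I expect to be the main obstacle is this last one --- representing the $\F_t$-measurable selectors of an $\F_{t+1}$-measurable subspace-valued map as the selectors of a single $\F_t$-measurable subspace-valued map. The projector-plus-conditional-expectation device keeps it self-contained, but one must check carefully that $E[P\mid\F_t]$ admits an $\F_t$-measurable version and that $\omega\mapsto\ker(I-E[P\mid\F_t](\omega))$ is a measurable subspace-valued map (via, e.g., writing it as the orthogonal complement of the span of the columns of $I-E[P\mid\F_t]$). Alternatively one may invoke the standard representation of a linear, $\F_t$-decomposable, $L^0$-closed subset of $L^0(\reals^d,\F_t)$ as $L^0(N,\F_t)$ with $N$ subspace-valued, after verifying that $\N'_t$ is linear and decomposable (clear from the recursion) and $L^0$-closed (because it is the set of $\F_t$-measurable selectors of the closed-valued map $D_{t+1}$, and an a.s.\ limit of such selectors is again one). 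The remaining facts used above --- measurability of $C^0$, and $L^0(A,\F)+L^0(B,\F)=L^0(A+B,\F)$ and $L^0(A,\F)\cap L^0(B,\F)=L^0(A\cap B,\F)$ for measurable subspace-valued $A,B$ --- are routine consequences of measurable selection theory.
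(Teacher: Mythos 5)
Your proof is correct, and its overall architecture mirrors the paper's: isolate the time-$t$ constraint and write $N_t=C^0_t\cap N'_t$ (the paper's $M_t$), derive the recursion $\N'_t=\{x_t\ \F_t\text{-measurable}:\ x_t\in N'_{t+1}+C^0_{t+1}\ \text{a.s.}\}$, and prove by backward induction that $\N'_t$ consists of the selectors of an $\F_t$-measurable subspace-valued map. Where you part ways is in the single genuinely nontrivial step: realizing the $\F_t$-measurable selectors of the $\F_{t+1}$-measurable subspace-valued map $D_{t+1}=N'_{t+1}+C^0_{t+1}$ as $L^0(N'_t,\F_t)$ for some $\F_t$-measurable subspace-valued $N'_t$. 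The paper dispatches this by citing a theorem of Truffert (the existence of the largest closed $\F_t$-measurable submap of $D_{t+1}$ with the same $\F_t$-measurable selectors, noting its values are then automatically subspaces), whereas you give a hands-on construction: take the $\F_{t+1}$-measurable orthogonal projector $P$ onto $D_{t+1}$, form the $\F_t$-measurable PSD matrix $E[P\mid\F_t]$, and set $N'_t=\ker(I-E[P\mid\F_t])$. Your device is elementary and self-contained, and the verification that $v\in D_{t+1}$ a.s.\ iff $v\in N'_t$ a.s.\ (using $v^\top(I-P)v=\|(I-P)v\|^2\ge0$, pulling $\F_t$-measurable $v$ through the conditional expectation, with a truncation to $\{|v|\le n\}$ to make the conditional expectations legitimate) is correct; the measurability of $\omega\mapsto\ker(I-E[P\mid\F_t](\omega))$ is indeed routine. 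The paper's route is shorter but leans on an external measurable-selection result; yours trades that reference for a page of linear algebra and gives an explicit formula for $N'_t$. Both are valid; your alternative remark at the end (decomposable, $L^0$-closed linear subsets of $L^0(\reals^d,\F_t)$ are of the form $L^0(N,\F_t)$) is essentially the content of the theorem the paper cites.
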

\begin{proof}
We consider first the larger set 
\begin{align*}
\M_t=\big\{x_t\in L^0(\reals^d,\F_t)\mk\exists \,x_{t+1}&\pk x_{T}\:\text{s.t.}\:\Delta x_s\in L^0(C_s^0,\F_s)\\ &\forall\, s=t+1\pk T\:\text{and}\: x_T=0\big\} 
\end{align*}
and show that there is an $\F_t$-measurable map $M_t:\Omega\tos\reals^d$ whose values are linear subspaces of $\reals^d$ and $\M_t=L^0(M_t,\F_t)$. Then the $\F_t$-measurable map $N_t:\Omega\tos\reals^d$ with $N_t(\omega):=M_t(\omega)\cap C_t^0(\omega)$ has the desired properties. 

In order to obtain the maps $M_t$ we argue by induction on $T$. For $T=0$ we have $\M_T=M_T=\{0\}$. Now assume that the claim holds for any $T$-step model and consider the $T+1$-step model $(C_t)_{t=0}^{T}$. By the induction hypothesis there exist $\F_s$-measurable maps $M_s$ whose values are linear subspaces of $\reals^d$ such that $\M_s=L^0(M_s,\F_s)$ for each $s\in\{1\pk T\}$. Note that $x_0\in\M_0$ iff $x_0\in M_1+C_1^0$ almost surely. Indeed, $x_0\in\M_0$ iff there exists an $x_1\in\M_1$ such that  $x_1-x_0\in C_1^0$ almost surely. The mapping $L(\omega):=M_1(\omega)+C_1^0(\omega)$ is $\F_1$-measurable (see \cite[Proposition~14.11(c)]{rw98}) and its values are linear subspaces of $\reals^d$ and in particular closed. By theorem on page 135 of \cite{tr91}, there exists the largest closed $\F_0$-measurable set-valued map $M_0$ such that $M_0\subseteq L$ almost surely and $L^0(M_0,\F_0)=L^0(L,\F_0)=\M_0$. Clearly $M_0(\omega)$ is a linear subspace of $\reals^d$ for each $\omega$.
\end{proof}

For each $\omega$ we denote by $N_t^\perp(\omega)$ the orthogonal complement of $N_t(\omega)$ in $\reals^d$. Then $N_t^\perp:\Omega\tos\reals^d$ is an $\F_t$-measurable (see \cite[14.12(e)]{rw98}) map whose values are linear subspaces of $\reals^d$.

\begin{lemma}\label{lem:ort}
Let $C$ be a convex market model and let $c\in A(C)$. Then there exists a process $x^\perp\in\A$ such that $x_t^\perp\in N_t^\perp$, $\Delta x^\perp_t+c_t\in C_t$ for all $t=0\pk T$ and $x^\perp_T=0$.
\end{lemma}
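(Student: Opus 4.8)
The plan is to start from an arbitrary hedging strategy for $c$ and to subtract from it a carefully chosen ``null strategy'' $\delta\in\N(C^0)$, so that $x_t-\delta_t$ lands in $N_t^\perp$ for every $t$. Since the elements of $\N(C^0)$ have all increments in the subspaces $C^0_t\subset C_t$ and vanish at time $T$, passing from $x$ to $x-\delta$ preserves both $\Delta x_t+c_t\in C_t$ (because $C^0_t$ is a linear subspace contained in $C_t$, so $C_t+C^0_t=C_t$) and $x_T=0$; all the real work is in making $x-\delta$ orthogonal to $N_t$ at every date.

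First I would use $c\in A(C)$ to fix some $x\in\A$ with $\Delta x_t+c_t\in C_t$ a.s.\ for all $t$ and $x_T=0$, and let $n_t$ be the orthogonal projection of $x_t$ onto $N_t$. Because $N_t$ is an $\F_t$-measurable subspace-valued map and $x_t$ is $\F_t$-measurable, $n_t$ is an $\F_t$-measurable selection of $N_t$; in particular $n_t\in\N_t$, and $\delta\in\N(C^0)$ will do the job precisely when $P_{N_t}\delta_t=n_t$ for all $t$, where $P_{N_t}$ denotes orthogonal projection onto $N_t$.

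The heart of the argument, and the step I expect to be the main obstacle, is producing a \emph{single} $\delta\in\N(C^0)$ with $P_{N_t}\delta_t=n_t$ for every $t$: one has to drive the auxiliary process to $0$ at time $T$ while at each intermediate date forcing its $N_t$-component to equal $n_t$, all in an $\F_t$-measurable way. I would build $\delta$ by forward recursion, maintaining the invariant that $\delta_t\in\M_t$ (so the process can still be steered to $0$ at $T$ through increments in $C^0$) and $P_{N_t}\delta_t=n_t$. Set $\delta_0:=n_0$, which is legitimate since $N_0\subset M_0\cap C^0_0$. Given $\delta_t$ with the invariant, $\delta_t\in\M_t$ provides, via the characterisation of $\M_t$ used in the proof of Lemma~\ref{lem:proj} together with a measurable selection, an $\F_{t+1}$-measurable $\delta'_{t+1}\in\M_{t+1}$ with $\delta'_{t+1}-\delta_t\in C^0_{t+1}$ a.s.; then put
\[
\delta_{t+1}:=\delta'_{t+1}+\bigl(n_{t+1}-P_{N_{t+1}}\delta'_{t+1}\bigr).
\]
The correction term is an $\F_{t+1}$-measurable selection of $N_{t+1}\subset M_{t+1}\cap C^0_{t+1}$, so $\delta_{t+1}$ is still in $\M_{t+1}$, its increment is still in $C^0_{t+1}$, and now $P_{N_{t+1}}\delta_{t+1}=n_{t+1}$. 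At $t=T$ the invariant gives $\delta_T\in M_T=\{0\}$, so $\delta\in\N(C^0)$.

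Finally I would set $x^\perp:=x-\delta$. Then $P_{N_t}x^\perp_t=n_t-n_t=0$, so $x^\perp_t\in N_t^\perp$; $\Delta x^\perp_t+c_t=(\Delta x_t+c_t)-\Delta\delta_t\in C_t+C^0_t=C_t$; and $x^\perp_T=x_T-\delta_T=0$, which is exactly the assertion. The only structural inputs are the identity $N_t=M_t\cap C^0_t$ (which makes the corrections harmless) and the fact that $M_t$, $N_t$ and $C^0_t$ are subspace-valued; note that no no-arbitrage hypothesis enters, consistently with the statement.
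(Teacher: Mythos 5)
Your proof is correct, and it rests on exactly the same structural ingredients as the paper's: the identity $N_t=M_t\cap C^0_t$, the subspace structure of $M_t$ and $C^0_t$ from Lemma~\ref{lem:proj}, the fact that $C_t+C^0_t=C_t$ so that subtracting a process in $\N(C^0)$ preserves hedging, and a projection of the current position onto $N_t$ at each date. The organization differs: the paper argues by backward induction on the horizon $T$, projecting $x_0$ onto $N_0$, extending that projection to an element of $\N(C^0)$, subtracting, and invoking the induction hypothesis on the shifted claim in the $T$-step tail model, so that the total correction is built implicitly and one never needs to name the invariant. You instead construct one explicit $\delta\in\N(C^0)$ in a single forward pass, maintaining the invariant $\delta_t\in\M_t$ and $P_{N_t}\delta_t=n_t$ and using $M_T=\{0\}$ to close the loop at $T$, then subtract once. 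Both are valid; the paper's recursion is slightly leaner because the hypothesis $\delta_t\in\M_t$ is hidden in the reduction to the tail model, while your version makes the bookkeeping visible, at the modest cost of spelling out the measurable-selection step that produces $\delta'_{t+1}$ from $\delta_t\in M_{t+1}+C^0_{t+1}$.
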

\begin{proof}
We will prove the existence of the process $x^\perp$ by induction on $T$. For $T=0$ we have $x_T^\perp:=x_T=0$. Assume that the claim holds for any $T$-step model and consider the $T+1$-step model $(C_t)_{t=0}^T$. Let $c\in A(C)$ and let $x\in\A$ be such that $\Delta x_t+c_t\in C_t$ for all $t=0\pk T$ and $x_T=0$. We denote by $x_0^0$ the $\F_0$-measurable projection of $x_0$ on $N_0$ and by $(x_1^0\pk x_{T-1}^0,0)$ an extension of $x_0^0$ to a self-financing portfolio process in the model $C^0$ (the existence of such an extension is given by Lemma~\ref{lem:proj}). Then 
\[
y_t:=x_t-x^0_t,\quad t=0\pk T-1,\quad y_T:=0
\]
defines an adapted process $y$ with $y_0=x_0-x_0^0\in N_0^\perp$ almost surely. Moreover, $y$ hedges $c$. Indeed, since $\Delta x_t^0\in C^0_t$ we have
\[
\Delta y_t+c_t=\Delta x_t+c_t-\Delta x_t^0\in C_t-C_t^0=C_t\quad P\text{-a.s.}
\]
for all $t=0\pk T$. 

The process $(y_1,\pk y_{T-1},0)$ hedges the claim $(c_1-y_0,c_2\pk c_T)$ in the $T$-step model $(C_t)_{t=1}^T$. Thus by the induction hypothesis there is a process  $y^\perp=(y_1^\perp,\pk y_{T-1}^\perp,0)$ such that $y_t^\perp\in N_t^\perp$ almost surely and $y^\perp$ hedges $(c_1-y_0,c_2\pk c_T)$.  Then the process  $x^\perp:=(y_0, y_1^\perp,\pk y_{T-1}^\perp,0)$ has the required properties.
\end{proof}

The following lemma was used first in Kabanov and Stricker \cite{ks1}, it is also documented e.g.\ in \cite{sch4}, \cite{krs3}, \cite{fs4} and \cite{ds6}. We refer to these works for the proof.

\begin{lemma}\label{lem:sequence}
Let $(x^n)_{n=1}^\infty$ be a sequence of random vectors in $L^0(\Omega,\F, P, \reals^d)$ such that $\liminf_n|x^n|<\infty\;P$-almost surely. Then 
there exists an $\F$-measurable increasing $\naturals$-valued random sequence $(\tau^n)_{n=1}^\infty$ such that $(x^{\tau^n})_{n=1}^\infty$ converges almost surely to some $x\in L^0(\Omega,\F, P, \reals^d)$.
\end{lemma}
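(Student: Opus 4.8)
The plan is to prove the lemma in two stages: first reduce to the case of an almost surely \emph{bounded} sequence, and then dispatch the bounded case by induction on the dimension~$d$.

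For the reduction, I would set $R:=\liminf_n|x^n|+1$, which by hypothesis is an $\F$-measurable $[0,\infty)$-valued random variable, and then define recursively $\rho^1:=\inf\{n\ge1:|x^n|\le R\}$ and $\rho^{k+1}:=\inf\{n>\rho^k:|x^n|\le R\}$. By the definition of $\liminf$, for $P$-a.e.\ $\omega$ there are infinitely many $n$ with $|x^n(\omega)|<R(\omega)$, so each $\rho^k$ is an $\F$-measurable, a.s.\ finite, strictly increasing $\naturals$-valued random variable, and $\sup_k|x^{\rho^k}|\le R<\infty$ a.s. Granting the lemma for a.s.\ bounded sequences, one applies it to $(x^{\rho^k})_k$ to obtain $\F$-measurable strictly increasing $(\mu^j)_j$ with $x^{\rho^{\mu^j}}\to x$ a.s.; then $\tau^j:=\rho^{\mu^j}=\sum_k k\,\mathbf 1_{\{\mu^j=k\}}$ is $\F$-measurable, strictly increasing, and does the job. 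Hence it suffices to treat the case $\sup_n|x^n|\le R<\infty$ a.s.

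For the base case $d=1$, let $x^*:=\limsup_n x^n$, which is $\F$-measurable and a.s.\ finite, put $\tau^0:=0$, and set $\tau^k:=\inf\{n>\tau^{k-1}:x^n\ge x^*-1/k\}$. Since $\limsup_n x^n(\omega)=x^*(\omega)$ forces $x^n(\omega)\ge x^*(\omega)-1/k$ for infinitely many $n$, each $\tau^k$ is $\F$-measurable, a.s.\ finite, and strictly increasing; from $x^{\tau^k}\ge x^*-1/k$ one gets $\liminf_k x^{\tau^k}\ge x^*$, while $\limsup_k x^{\tau^k}\le\limsup_n x^n=x^*$, so $x^{\tau^k}\to x^*$ a.s. For the inductive step, assume the bounded case in dimension $d-1$ and write $x^n=(x^n_1,\tilde x^n)$ with $x^n_1\in\reals$, $\tilde x^n\in\reals^{d-1}$, both a.s.\ bounded. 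Applying the one-dimensional case to $(x^n_1)_n$ yields $\F$-measurable strictly increasing $(\sigma^n)_n$ with $x^{\sigma^n}_1$ convergent a.s.; applying the induction hypothesis to the a.s.\ bounded sequence $(\tilde x^{\sigma^n})_n$ yields $\F$-measurable strictly increasing $(\mu^j)_j$ with $\tilde x^{\sigma^{\mu^j}}$ convergent a.s. Then $\tau^j:=\sigma^{\mu^j}$ is $\F$-measurable and strictly increasing, and $x^{\tau^j}=(x^{\tau^j}_1,\tilde x^{\tau^j})$ converges a.s.\ because both coordinate blocks do (the first as a subsequence of a convergent sequence). The limit $x$, being an a.s.\ limit of the $\F$-measurable vectors $x^{\tau^j}$, lies in $L^0(\Omega,\F,P,\reals^d)$.

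The routine but essential points to verify are the measurability bookkeeping: that $\liminf_n|x^n|$ and $\limsup_n x^n$ are $\F$-measurable (countable $\liminf$/$\limsup$ of measurable functions), that the hitting times $\rho^k,\tau^k$ are $\F$-measurable via $\{\tau^k=n\}\in\F$, and — the one spot that needs a moment's care — that a ``subsequence of a random subsequence'' such as $\sigma^{\mu^j}$ is again an $\F$-measurable $\naturals$-valued random variable, which follows by writing it as $\sum_n n\,\mathbf 1_{\{\mu^j=n\}}$ with $\{\mu^j=n\}\in\F$. The genuinely substantive idea, and the only place that uses the hypothesis nontrivially, is the introduction of the a.s.\ finite random bound $R$ in the reduction step, which converts the $\liminf$ assumption into a pathwise uniform bound along a measurably chosen subsequence; after that the argument is just a coordinatewise diagonal extraction.
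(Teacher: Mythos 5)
Your proof is correct, and it is essentially the standard argument found in the references the paper cites for this lemma (Kabanov--Stricker, Schachermayer, Kabanov--R\'asonyi--Stricker, F\"ollmer--Schied, Delbaen--Schachermayer): reduce the $\liminf$ hypothesis to an a.s.\ pointwise bound along a measurably chosen subsequence, then perform a coordinatewise measurable extraction. The paper itself does not reprove the lemma but defers to those sources, so there is no internal proof to compare against; your reconstruction matches the one those sources give. One small slip in the measurability bookkeeping: you write $\sigma^{\mu^j}=\sum_n n\,\mathbf 1_{\{\mu^j=n\}}$, but that sum equals $\mu^j$, not $\sigma^{\mu^j}$; the intended identity is $\sigma^{\mu^j}=\sum_n \sigma^n\,\mathbf 1_{\{\mu^j=n\}}$, which is $\F$-measurable for exactly the reason you state (each $\sigma^n$ is $\F$-measurable and $\{\mu^j=n\}\in\F$). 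This does not affect the argument. Everything else --- the random threshold $R=\liminf_n|x^n|+1$, the hitting-time construction of $(\rho^k)$, the $\limsup$ trick in dimension one, and the coordinatewise induction with composition of random subsequences --- is sound.
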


\noindent
{\bf Proof of Theorem \ref{thm:cl}.}
By Lemma~\ref{lem:ata}, the closedness of $A_T(C)$ is equivalent to the closedness of $A(C)$. By Lemma \ref{lem:ort}, we may assume that $x_t\in N^\perp_t$ for all $t=0\pk T-1$ in the definition of $A(C)$. 

In order to prove the closedness of $A(C)$ we will prove the following more precise statement:
Let $(c^n)_{n=1}^\infty$ be a sequence of claim processes in $A(C)$ such that $c^n\to c\in\A$ in measure and let $(x^n)_{n=1}^\infty$ be any sequence in $\A$ such that $\Delta x_t^n+c^n_t\in C_t,\;x^n_t\in N^\perp_t$ for all $t$ and $x^n_T=0$. Then there exists an $\F$-measurable random subsequence $(\tau^n)_{n=1}^\infty$ of $\naturals$ and an $x\in\A$ such that $x^{\tau^n}\to x$ almost surely, $\Delta x_t+c_t\in C_t,\;x_t\in N^\perp_t$ for all $t$ and $x_T=0$. In particular, $c\in A(C)$.

The proof will follow by induction on $T$. For $T=0$ the statement is obvious since the set $L^0(C_0,\F_0)$
is closed in $L^0(\Omega, \F_0, P, \reals^d)$ and any hedging strategy is identical $0$. Now assume that the statement holds for any $T$-step model and consider a $T+1$-step model $C=(C_t)_{t=0}^{T}$. Let $c^n\in A(C),\;n=0,1\pke$ be such that $c^n\to c\in\A$ in measure. By passing to a subsequence if necessary, we may assume that $c^n\to c$ almost surely. Let $x^n\in\A,\;n=0,1\pke$ be such that $\Delta x_t^n+c_t^n\in C_t,\;x^n_t\in N^\perp_t$  for all $t$ and $x^n_{T}=0$.

Case 1: the sequence $(x_0^n)_{n=1}^\infty$ is almost surely bounded. In this case, we can apply Lemma \ref{lem:sequence} to find an $\F_0$-measurable random sequence $(\sigma^n)_{n=1}^\infty$ such that $x_0^{\sigma^n}$  converges to some $x_0\in L^0(\Omega, \F_0, P, \reals^d)$ almost surely. Then $x_0\in L^0(N^\perp_0,\F_0)$ since each $N_0^\perp(\omega)$ is a closed subspace of $\reals^d$.  Moreover, the sequence of claim processes
\[
\tilde c^n:=(c^{\sigma^n}_1-x_0^{\sigma^n}, c^{\sigma^n}_2,\ldots, c^{\sigma^n}_{T}),\quad n\in\naturals
\]
belongs to $A\left((C_t)_{t=1}^{T}\right)$ with the hedging sequence $\tilde x^n:=(x_1^{\sigma^n}\pk x_{T-1}^{\sigma^n},0)$. Since $\tilde c^n\to(c_1-x_0,c_2\pk c_{T})$ almost surely, we apply the induction hypothesis to the $T$-step model $(C_t)_{t=1}^{T}$ and obtain an $\F$-measurable random subsequence $(\tau^n)_{n=0}^\infty$ of $(\sigma^n)_{n=0}^\infty$ and an $(\F_t)_{t=1}^{T}$-adapted process $\tilde x=(x_1\pk x_{T})$ such that $\tilde x^{\tau^n}\to\tilde x$ almost surely, $\tilde x$ hedges $\tilde c$ and $x_t\in N^\perp_t$ for $t=1,\ldots,T$.  This proves the claim, since $x_t^{\tau^n}\to x_t$ almost surely for all $t=0\pk T$ and the process $x:=(x_0\pk x_{T})$ has  the desired properties. Indeed, $x$ is adapted, $x_t\in N^\perp_t$ for all $t$, $x_{T}=0$ and we have 
\[
\Delta x_t+ c_t\in C_t\quad P\text{-a.s.}\quad \mbox{for all}\quad t=0\pk T
\]
since $C_t(\omega)$ is a closed subset of $\reals^d$ for each $\omega$.  

Case 2: the sequence $(x_0^n)_{n=1}^\infty$ is not almost surely bounded. We will show that this leads to a contradiction with the robust no scalable arbitrage property. Indeed, assume that the set
\[
A:=\{\omega\in\Omega\,|\,\liminf|x^n_{0}(\omega)|=\infty\}
\]
has positive probability and consider the sequences
\[
\hat x^n:=\alpha^nx^n\quad\mbox{and}\quad \hat c^n:=\alpha^nc^n,\qquad n\in\naturals,
\]
where $\alpha^{n}=\frac{\chi_{A}}{\max\{|x^{n}_{0}|,1\}}$. The processes $(\hat x^n)_{n=1}^\infty$ and $(\hat c^n)_{n=1}^\infty$ are adapted, $\hat x^n_t\in L^0(N^\perp_t,\F_t)$ for all $t$, $\hat x^n_{T}=0$, $\hat c_t^n\to0$ almost surely for each $t$ and the sequence $(\hat x_0^n)_{n=0}^\infty$ is almost surely bounded. Moreover, we have
\begin{equation}\label{ac}
\hat x^n_t -\hat x^n_{t-1}+ \hat c^n_t\in\alpha^n C_t\quad P\text{-a.s.},
\end{equation}
where $\alpha^n C_t\subset C_t$ since $C_t$ is convex, $0\in C_t$ and $|\alpha^n|\le1$.  Thus we have the same situation as in case 1, and using the same reasoning we obtain an $\F$-measurable random sequence $(\tau^n)_{n=0}^\infty$ and an adapted process $x$ such that $\hat x^{\tau^n}\to x$ almost surely, $x_t\in N^\perp_t$ for all $t$ and $x_{T}=0$. Moreover, since $\alpha^n\to 0$ a.s., \eqref{ac} and \eqref{recession} imply
\[
 \Delta x_t\in C_t^\infty\quad P\text{-a.s.}\quad \mbox{for all}\quad t=0\pk T.
\]
Thus $x\in\N(C^0)$, by Lemma~\ref{lem:zero}, and then $x_0\in N_0$ almost surely, by Lemma~\ref{lem:proj}. Since also $x_0\in N_0^\perp$, we have $x_0=0$ almost surely.  But we also have  $|\hat x^{n}_{0}|\to 1$ and hence  $|x_{0}|=1$ on the nontrivial set $A$, which is a contradiction. Thus, case 2 cannot occur under the robust no scalable arbitrage condition.\hfill $\square$

\bibliographystyle{plain}
%\bibliography{/home/penner/bibliography}
\bibliography{sp}
\end{document}